\definecolor{blue-violet}{rgb}{0.54, 0.17, 0.89}
\newtheorem{theorem}{Theorem}
\newtheorem{lemma}{Lemma}
\newtheorem{definition}{Definition}
\newcommand{\ifundefinedcolor}[3]{\ifundefined{#1}{\color{black}{#2}\color{black}}{\color{black}{#3}\color{black}}}
\newcommand{\dkcolor}{Bittersweet}
\newcommand{\otcolor}{RedViolet}
\newcommand{\sttcolor}{OliveGreen}
\newcommand{\dkstyle}[1]{\textcolor{\dkcolor}{#1}}
\newcommand{\otstyle}[1]{\textcolor{\otcolor}{#1}}
\newcommand{\sttstyle}[1]{\textcolor{\sttcolor}{#1}}
\newcommand{\sttabd}{STT\(\forall_{\beta\delta}\)}
\newcommand{\qzero}{\ensuremath{\mathcal{Q}_0}}
\newcommand{\rw}{\hookrightarrow}
\newcommand{\forallT}[3]{\ensuremath{\forall #1^{#2}.~#3}}
\newcommand{\forallp}[0]{\rotatebox[origin=c]{180}{\ensuremath{\mathcal{A}}}}
\newcommand{\forallP}[2]{\ensuremath{\forallp #1.~#2}}
\newcommand{\defn}{:\equiv}
\newcommand{\aff}{:=}
\newcommand{\app}[2]{#1~#2}
\newcommand{\absT}[3]{\lambda#1^{#2}.~#3}
\newcommand{\abs}[2]{\absT{#1}{}{#2}}
\newcommand{\product}[3]{\Pi#1:#2.~#3}
\newcommand{\subst}[3]{#1[#2\aff#3]}
\newcommand{\Prop}{\mathbf{Prop}}
\newcommand{\ot}{\systemnamestyle{OpenTheory}}
\newcommand{\lp}{\(\lambda\Pi\)}
\newcommand{\lpc}{\ensuremath{\lambda\Pi\text{-calculus modulo theory}}}
\newcommand{\lpcstt}{Dedukti[\sttabd{}]}
\newcommand{\systemnamestyle}{\textsc}
\newcommand{\Rule}[2]{\inferrule*[right=#2]{ }{#1}}
\newcommand{\RuleP}[3]{\inferrule*[right=#3]{#1}{#2}}
\newcommand{\RulePP}[4]{\inferrule*[right=#4]{#1 \\ #2}{#3}}
\newcommand{\RulePPP}[5]{\inferrule*[right=#5]{#1 \\ #2 \\ #3}{#4}}
\newcommand{\addrule}[1]{#1 \and}
\newcommand{\ifundefined}[3]{\expandafter\ifx\csname #1\endcsname\relax #2\else #3\fi}
\newcommand{\benkhide}[0]{\def \@benkhide{}}
\newenvironment{rules}[2]
{
  \def\tmpvariable{#1}
  \def\tmpvariabled{#2}
  \begin{figure}
    \begin{mdframed}
      \begin{center}
  \begin{mathpar}
}
{
\end{mathpar}
\end{center}

\caption{\tmpvariable}
\label{\tmpvariabled}
\end{mdframed}
\end{figure}

}
\newcommand{\nil}[0]{\ensuremath{\mathrm{[~]}}}
\newcommand{\Kind}[0]{\ensuremath{\mathbf{Kind}}}
\newcommand{\Type}[0]{\ensuremath{\mathbf{Type}}}
\newcommand{\wf}[0]{\ensuremath{\text{well-formed}}}
\newcommand{\PiT}[3]{\ensuremath{\Pi{}#1^{#2}.~#3}}
\title{Sharing a Library between Proof Assistants: Reaching out to the HOL Family\footnote{Associated webpage at \url{http://www.lsv.fr/\~fthire/research/sttforall/index.php}.}}
\author{François Thiré
  \institute{ENS Paris-Saclay, LSV,  CNRS, Université Paris-Saclay, INRIA, LIX}
  \email{\quad francois.thire@lsv.fr}
}
\begin{document}
\maketitle

\begin{abstract}
We observe today a large diversity of proof systems. This diversity  has the negative consequence that a lot of theorems are proved many times. Unlike programming languages, it is difficult for these systems to co-operate because they do not implement the same logic. Logical frameworks are a class of theorem provers that overcome this issue by their capacity of implementing various logics. In this work, we study the \sttabd{} logic, an extension of Simple Type Theory that has been encoded in the logical framework Dedukti~\cite{DBLP:conf/tlca/CousineauD07}. We present a translation from this logic to OpenTheory~\cite{hurd2011}, a proof system and interoperability tool between provers of the HOL family. We have used this translation to export an arithmetic library containing Fermat's little theorem to OpenTheory and to two other proof systems that are Coq~\cite{thecoqdevelopmentteam20171133970} and Matita~\cite{DBLP:conf/cade/AspertiRCT11}.
\end{abstract}

\section{Introduction}
Since Automath and LCF, many proof systems have been designed and used to develop computer-checked mathematics, for example Matita, Coq, HOL4, HOL Light, Isabelle/HOL... These systems sometimes implement different logics. In each of them, proving Fermat's little theorem requires proving about \(300\) lemmas which contribute to the arithmetic library of the system. Developing such a library can be tiresome and we may want, instead of recreating it again and again in different systems, to translate this library from one system to another.

The aim of this paper is to present the logic \sttabd{}, an extension of Simple Type Theory that is powerful enough to express easily arithmetic theorems, but weak enough so that it is easy to export theorems from this logic to several other systems, making this logic suitable for interoperability. \sttabd{} has been implemented in the logical framework Dedukti~\cite{DBLP:conf/tlca/CousineauD07}. In order to illustrate its adequacy for exporting theorems, we have successfully implemented a translation from \sttabd{} to Coq and Matita, and in order to target proof systems based on HOL (Higher-Order Logic), we have also implemented a translation from \sttabd{} to OpenTheory~\cite{hurd2011}, which is a proof system for interoperability between the provers of the HOL family. Then, we applied our translations on an arithmetic library that was available in \sttabd{}. The translation from \sttabd{} to OpenTheory is interesting because even if these two systems are very close, they are based on different design choices that make the translation harder than expected. The description of \sttabd{} and its translation to OpenTheory are the two main contributions of this paper.

\subsection{Logical Frameworks and Interoperability}

Sharing proofs between systems is not always possible since some logics are more expressive than others. For example, one may quantify over proofs in the Calculus of Constructions but not in Higher-Order Logic. Moreover, it is not conceivable to develop, for every pair of proof systems, a specific translation because there would be a quadratic number of translations.

Logical frameworks offer an approach to overcoming these two issues. They are a special kind of proof systems in which different logics and proof systems can be specified. Using a logical framework, there are two ways of sharing proofs. Suppose that the proof of a theorem \(thm_A\) expressed in a logic \(\mathcal{L}_A\) needs a proof of a theorem \(thm_B\) already proven in the logic \(\mathcal{L}_B\). A first solution explored, for instance, by Cauderlier and Dubois~\cite{DBLP:conf/itp/CauderlierD17} is to have the combined proof inside the logical framework by encoding the proofs of \(thm_A\) and \(thm_B\) in it. In this solution, proofs are not exported outside of the logical framework.

Another solution is to translate \(thm_A\) directly from \(\mathcal{L}_A\) to \(\mathcal{L}_B\). This process can be decomposed in three steps. The first step translates \(thm_A\) from \(\mathcal{L}_A\) to the encoding of \(\mathcal{L}_A\) in the logical framework \(U\) denoted by \(U[\mathcal{L}_A]\). The second step translates \(thm_A\) from \(U[\mathcal{L}_A]\) to \(U[\mathcal{L}_B]\). Finally, the third step translates \(thm_A\) from \(U[\mathcal{L}_B]\) to the proof system \(\mathcal{L}_B\). While the first and last steps are total functions, the second step is, in general, a partial function since the translation is not always possible: For example, proof irrelevance is not expressible in HOL but it is in the Calculus of Constructions.

This is the solution we used to import an arithmetic library in Dedukti[\sttabd{}]. Originally, this library has been implemented in the proof system Matita. The translation process that goes from Dedukti[Matita] to Dedukti[\sttabd{}] is described in a separate paper~\cite{interop}.

\subsection{How interoperability should work}
\label{sec:interop}

The definition of a function or a type might differ between several proof systems. For example, in Coq, inductive types such as \(\mathbb{N}\) are primitive while in HOL they are encoded. This is a problem for a generic translation:
\begin{itemize}
\item The definition of a type is not unique. For example, in Coq, natural numbers have at least three different definitions. Which one should be used?
\item If one uses an intermediate system to translate these proofs, the encoding from the first system to the intermediate may \textit{degrade} the original definition. Recreating the original type or function definition might be difficult. For example, the translation of the inductive type \(\mathbb{N}\) is translated in Dedukti as four declarations (\(\mathbb{N}\), \(0\), \(S\) and the recursor) with two rewrite rules. But from these declarations and the rewrite rules, it is difficult to identify the definition of an inductive type.
\end{itemize}

This implies that types and functions in general will be axiomatized during the transformations. For example, the arithmetic library we export into OpenTheory comes with \(40\) constants and \(80\) axioms to define. Fortunately, all of these axioms can be proven easily. Among these axioms, one can find
\[ \forall x, \forall y, x = y \Rightarrow y = x\]
or
\[ \forall n, \forall P, P~n \Rightarrow (\forall m, (n \leq m \Rightarrow P~m \Rightarrow P~(S~m))) \Rightarrow \forall y, (n\leq y \Rightarrow P~y)\]

Thus, the users of the library have to instantiate \textbf{once} the library with the definitions they want to use. We claim that this is the way interoperability should work because this is very flexible: There is no need to regenerate the whole library if the user wants to change one definition.

\subsection{Contributions}

The contributions of this paper are presented in Figure~\ref{fig:process} and detailed below:
\begin{figure}
  \centering
  \scalebox{0.6}{
    \begin{tikzpicture}
      [node distance=3cm,
      on grid,
      checker/.style = {shape=ellipse, draw=blue-violet!100, fill=blue-violet!25, align=center,
        minimum height=1.2cm, minimum width=2cm},
      checkerb/.style = {dashed, shape=ellipse, draw=blue-violet!100, fill=blue-violet!25, align=center,
        minimum height=1.2cm, minimum width=2cm, label={west:#1}},
      dedukti node/.style = {shape=ellipse, draw=blue-violet!90, fill=blue-violet!50, label={south:Dedukti}},
      post/.style={->, >=stealth', semithick}]

      \node [checker,fill=blue-violet!75] (UA) {Dedukti[\sttabd{}]};
      \begin{scope}[on background layer]
        \node [dedukti node, inner sep=9pt, fit={(UA)}] (Dedukti) {};
      \end{scope}
      \node [checkerb] (STT) [above=of Dedukti] {\sttabd{}};
      \node [checker] (COQ) [yshift=-1.5cm, xshift=2cm, right=of Dedukti] {Coq};
      \node [checker] (MATITA) [right=of COQ] {Matita};
      \node [checker] (OpenTheory) [right=of STT, xshift=1cm, yshift=-1cm] {OpenTheory};
      \draw[post, very thick, <-, orange] (OpenTheory) -- (STT) node[midway, xshift=0.4cm, above] {\textcolor{black}{sec.~\ref{sec:otstt}}};
      \draw[post, very thick, <->, orange] (UA) -- (STT) node[midway,left] {\textcolor{black}{sec.~\ref{sec:background}}};
      \draw[post, very thick, <-, orange] (OpenTheory) -- (UA) ;
      \draw[post, very thick, <-, orange] (COQ)
      .. controls ([yshift=-0cm,xshift=5cm]Dedukti)
      .. (UA)node[midway,right, xshift=0.5cm, yshift=-0.5cm] {\textcolor{black}{sec.~\ref{sec:tocoq}}};
      \draw[post, very thick, <-, orange] (MATITA)
      .. controls ([yshift=1.5cm,xshift=3cm]COQ)
      .. (UA)node[midway,right, xshift=1cm, yshift=-0.5cm] {\textcolor{black}{sec.~\ref{sec:tocoq}}};
    \end{tikzpicture}
  }
    \caption{Contribution}
\label{fig:process}
\end{figure}
\begin{itemize}
\item We introduce a new logic namely \sttabd{} (Section~\ref{sec:background}) as well as its encoding in Dedukti;
\item We give a translation from \sttabd{} to OpenTheory (Section~\ref{sec:otstt});
\item We describe the embedding from \sttabd{} to Coq and Matita (Section~\ref{sec:tocoq});
\item We describe the translation of an arithmetic library from Dedukti[\sttabd{}] to OpenTheory, Coq and Matita (Section~\ref{sec:library}).
\end{itemize}

As we will see in Section~\ref{sec:background}, Dedukti[\sttabd{}] and \sttabd{} define the same logic. To simplify the presentation of the translation of this logic to OpenTheory, we choose to describe it as a translation from \sttabd{} even if the actual implementation is a translation from Dedukti[\sttabd{}].

\section{Dedukti and \sttabd}
\label{sec:background}
\subsection{Dedukti}

Dedukti is a logical framework that implements the \lpc{}~\cite{DBLP:conf/tlca/CousineauD07}~\cite{DBLP:phd/hal/Saillard15a}, a calculus that extends the \(\lambda\Pi\) calculus (also known as LF)~\cite{DBLP:journals/jacm/HarperHP93} with rewrite rules. These rules can be used for the convertibility test. The syntax is the following:

\begin{tabular}{cccc}
  Terms & \(A,B,t,u\) & ::= & \Kind ~|~\Type~|~\(\mathrm{\product{x}{A}{B}}\)~|~\(\app{A}{B}\)~|~\(\absT{x}{A}B{}\)~|~x\\
  Contexts & \(\Gamma \) & ::= & \(\emptyset\) ~|~ \(\Gamma, x:A\) ~|~ \(\Gamma, t \hookrightarrow u\)\\
\end{tabular}

and the type system of \lpc{} are respectively presented in Figure~\ref{lambdapi}. For simplicity, we do not present how to derive the judgment \(\Gamma \vdash t \hookrightarrow u~\mathbf{wf}\) here, but it can be found in Saillard's thesis~\cite{DBLP:phd/hal/Saillard15a}. Roughly, a rewrite rule \(t \hookrightarrow u\) is well typed when the types of \(t\) and \(u\) are convertible. One advantage of using rewrite rules is that more systems can be encoded in Dedukti using a \textit{shallow} encoding where by shallow we mean an encoding having the two following properties: 1) a binder of the source language is translated as a binder in the second language (using HOAS (Higher-Order Abstract Syntax)~\cite{Pfenning:1988:HAS:960116.54010} for example), and 2) the typing judgment in the source language is translated as a typing judgment in Dedukti. This means that we can use the type checker of Dedukti to check directly if a term from the source language encoded in Dedukti is well typed. The next two paragraphs are dedicated to the \sttabd{} system and its shallow encoding in Dedukti.

\begin{rules}{\lp{} modulo theory typing system}{lambdapi}
  \addrule{\Rule{\nil~\wf}{\dkstyle{\dkstyle{dk empty ctx}}}}
  \addrule{\RuleP{\Gamma \vdash A : s}{\Gamma, x:A~\mathbf{wf}}{\dkstyle{dk ctx sort}}}
  \addrule{\RuleP{\Gamma \vdash t\hookrightarrow u ~\mathbf{wf}}{\Gamma, t \hookrightarrow u~\mathbf{wf}}{dk \dkstyle{ctx \(\hookrightarrow\)}}}
  \addrule{\RuleP{\Gamma~\wf}{\Gamma \vdash \Type:\Kind}{\dkstyle{ctx \Type{} \Kind}}}
  \addrule{\RulePP{\Gamma~\wf}{(x:A) \in \Gamma}{ \Gamma \vdash x:A}{\dkstyle{dk var}}}
  \addrule{\RulePP{\Gamma \vdash A:\Type}{ \Gamma, x:A \vdash B : s}{\Gamma \vdash \PiT{x}{A}{B}:s}{\dkstyle{dk \(\Pi\)}}}
  \addrule{\RulePPP{\Gamma \vdash A:s}{ \Gamma, x:A \vdash B : s}{\Gamma,x:A \vdash t:B}{\Gamma \vdash \absT{x}{A}{t}:\PiT{x}{A}{B}}{\dkstyle{dk \(\lambda\)}}}
  \addrule{\RulePP{\Gamma \vdash t:\PiT{x}{A}{B}}{ \Gamma \vdash t' : A}{\Gamma \vdash t~t':B[x \leftarrow t']}{\dkstyle{dk app}}}
  \addrule{\RulePPP{\Gamma \vdash A:s}{ \Gamma \vdash t : A}{A\equiv_{\beta\Gamma} B}{\Gamma \vdash t:B}{\dkstyle{dk \(\equiv\)}}}
\end{rules}
\subsection{\sttabd}
\label{sec:sttm}

\sttabd{} is an extension of Simple Type Theory with prenex polymorphism and type operators.  A type operator is constructed using a name and an arity. This allows to declare types such as \texttt{bool}, \texttt{nat} or \texttt{\(\alpha\) list}. The polymorphism of \sttabd{} is restricted to prenex polymorphism as full polymorphism would make this logic inconsistent\footnote{Coquand's paper~\cite{DBLP:conf/lics/Coquand86} shows also that omitting types annotations for polymorphic types would make the logic inconsistent}~\cite{DBLP:conf/lics/Coquand86}.  The \sttabd{} syntax is presented in Fig.~\ref{fig:sttsyntax}. The type of propositions \(\Prop\) and the type of functions \(\to\), could be declared as type operators, of arity \(0\) and \(2\) respectively. Since they have a particular meaning for the typing judgment, we add them explicitly. Also, \sttabd{} allows to declare and define constants. Declaring constants is better for interoperability as discussed in section~\ref{sec:interop} but increases the number of axioms that need to be ultimately instantiated. The typing system and the proof system are presented in Fig.~\ref{fig:stttyping} and Fig.~\ref{fig:sttproof}. Finally, we point out that we identify in \sttabd{} the terms \(t\) and \(t'\) if they are convertible up to \(\beta\) and \(\delta\) (unfolding of constants).

\begin{figure}
  \centering
  \begin{tabular}{lccl}
    \textbf{Type operator} & \(p\) & & \\
    \textbf{Type variable} & \(X\) & & \\
    \textbf{Monotype} & \(A,B\) & \(\defn\) & \(X\)~|~\(\Prop\)~|~\(A \to B\)~|~\(p~A_1~...~A_n\)\\
    \textbf{Polytype} & \(T\) & \(\defn\) & \(A\)~|~\(\forallT{X}{}{T}\)\\
    \hline
    \textbf{Constant} & \(cst\) & & \\
    \textbf{Constant term} & \(c\) & \(\defn\) & \(cst\)~|~\(c~A\)\\
    \hline
    \textbf{Term variable} & \(x\) & & \\
    \textbf{Monoterm} & \(t,u\) & \(\defn\) & \(x\)~|~\(c\)~|~\(\absT{x}{A}{t}\) ~|~ \(\app{t}{u}\) ~|~\(t \Rightarrow u\)~|~ \(\forallT{x}{A}{t}\) ~|~ \(\absT{X}{}{t}\)\\
    \textbf{Polyterm} & \(\tau\) & \( \defn\) & \(t\)~|~\(\forallP{X}{}{\tau}\)\\
    \hline
    \textbf{Typing Context} & \(\Gamma\) & \(\defn\) & \(\emptyset\)~|~\(\Gamma, t:T\)~|~\(\Gamma, X\)\\
    \textbf{Proof Context} & \(\Xi \) & \(\defn\) & \(\emptyset\)~|~\(\Xi,t\)\\
    \textbf{Constant Context} & \(\Sigma\) & \(\defn\) & \(\emptyset\)~|~\(\Sigma, cst = \tau : T\)~|~\(\Sigma, cst : T\)~|~\(\Sigma, (p:n)\)\\
    \hline
    \textbf{Typing Judgment} & \(\mathcal{T}\) & \(\defn\) & \(\Sigma;\Gamma \vdash \tau : T\)\\
    \textbf{Proof Judgment} & \(\mathcal{P}\) & \(\defn\) & \(\Sigma;\Gamma;\Xi \vdash \tau\)\\
    \hline
    \textbf{MonoType well-formed} & & & \( \Sigma;\Gamma \vdash A~\mathbf{wf}\)\\
    \textbf{PolyType wellf-formed} & & & \( \Sigma;\Gamma \vdash T~\mathbf{wf}\)\\
    \textbf{Typing ctx well-formed} & & & \(\Sigma \vdash \Gamma~\mathbf{wf}\) \\
    \textbf{Constant ctx well-formed} & & & \(\Sigma~\mathbf{wf}\) \\
  \end{tabular}
\caption{\sttabd{} syntax}
\label{fig:sttsyntax}
\end{figure}

\begin{rules}{\sttabd{} typing system}{fig:stttyping}
  \addrule{\RuleP{X \in \Gamma}{\Sigma;\Gamma \vdash X~\mathbf{wf}}{\sttstyle{S WF var}}}
  \addrule{\Rule{\Sigma;\Gamma \vdash \Prop~\mathbf{wf}}{\sttstyle{S WF prop}}}
  \addrule{\RulePP{\Sigma;\Gamma \vdash A~\mathbf{wf}}{\Sigma;\Gamma \vdash B~\mathbf{wf}}{\Sigma;\Gamma \vdash A\to B~\mathbf{wf}}{\sttstyle{S WF fun}}}
  \addrule{\RuleP{\Sigma \vdash \Gamma~\mathbf{wf}}{\Sigma \vdash \Gamma,X~\mathbf{wf}}{\sttstyle{S WF ctx var}}}
  \addrule{\RulePP{(p:n) \in \Sigma}{\Sigma;\Gamma \vdash A_i~\mathbf{wf}}{\Sigma;\Gamma \vdash p~A_1~\dots~A_n~\mathbf{wf}}{\sttstyle{S WF tyop app}}}
  \addrule{\Rule{\emptyset~\mathbf{wf}}{\sttstyle{S WF empty}}}
  \addrule{\RulePP{\Sigma \vdash \Gamma~\mathbf{wf}}{\Sigma;\Gamma \vdash A~\mathbf{wf}}{\Sigma \vdash \Gamma,x:A~\mathbf{wf}}{\sttstyle{S WF ctx var}}}
  \addrule{\RulePP{\Sigma~\mathbf{wf}}{p \not\in Dom(\Sigma)}{\Sigma,(p:n)~\mathbf{wf}}{\sttstyle{S WF tyop}}}
  \addrule{\RuleP{\Sigma;\Gamma,X \vdash T~\mathbf{wf}}{\Sigma;\Gamma \vdash \forallT{X}{}{T}~\mathbf{wf}}{\sttstyle{S WF forall ty}}}
  \addrule{\RulePPP{\Sigma~\mathbf{wf}}{cst \not\in Dom(\Sigma)}{\Sigma \vdash T~\mathbf{wf}}{\Sigma, cst : T~\mathbf{wf}}{\sttstyle{S WF cst decl}}}
  \addrule{\RulePPP{\Sigma~\mathbf{wf}}{cst \not\in Dom(\Sigma)}{\Sigma;\emptyset;\emptyset \vdash \tau : T}{\Sigma, cst = \tau : T~\mathbf{wf}}{\sttstyle{S WF cst defn}}}
  \addrule{\RulePP{\Sigma~\mathbf{wf}}{\Sigma \vdash \Gamma~\mathbf{wf}}{\mathcal{C}, x:A \vdash x:A}{\sttstyle{S var}}}
  \addrule{\RulePP{ \mathcal{C} \vdash f : A \to B}{ \mathcal{C} \vdash t : A}{\mathcal{C} \vdash \app{f}{t} : B}{\sttstyle{S app}}}
  \addrule{\RuleP{ \mathcal{C}, x:A \vdash t:B}{ \mathcal{C} \vdash \absT{x}{A}{t} : A \to B}{\sttstyle{S abs}}}
  \addrule{\RulePP{ \mathcal{C} \vdash t : \Prop}{ \mathcal{C} \vdash u : \Prop}{ \mathcal{C} \vdash t \Rightarrow u : \Prop}{\sttstyle{S imp}}}
  \addrule{\RuleP{ \mathcal{C}, x:A \vdash t: \Prop}{ \mathcal{C} \vdash \forallT{x}{A}{t} : \Prop}{\sttstyle{S forall}}}
  \addrule{\RuleP{ \mathcal{C},X \vdash t:T}{ \mathcal{C} \vdash \abs{X}{t} : \forallT{X}{}{T}}{\sttstyle{S poly intro}}}
  \addrule{\RulePP{FV(B) \subseteq \mathcal{C}}{\mathcal{C} \vdash c : \forallP{X}{T}}{\mathcal{C} \vdash c~B : \subst{T}{X}{B}}{\sttstyle{S cst app}}}
  \addrule{\RuleP{typeof(\mathcal{C},cst) = T}{\mathcal{C} \vdash cst : T}{\sttstyle{S cst}}}
  \addrule{\RuleP{\mathcal{C}, X \vdash \tau : \Prop}{\mathcal{C} \vdash \forallP{X}{\tau} : \Prop}{\sttstyle{S Poly}}}
\end{rules}

\begin{rules}{\sttabd{} proof system}{fig:sttproof}
  \addrule {\RuleP{ \mathcal{C} \vdash t : \Prop}{\mathcal{C},t \vdash t}{\sttstyle{S assume}}}
  \\
  \addrule {\RulePP{ \mathcal{C} \vdash t }{\mathcal{C} \vdash t \Rightarrow u}{ \mathcal{C} \vdash u }{\sttstyle{S \(\Rightarrow_E\)}}}

  \addrule{\RuleP{ \mathcal{C}, t\vdash u }{ \mathcal{C} \vdash t \Rightarrow u }{\sttstyle{S \(\Rightarrow_I\)}}}

  \addrule{\RulePP{ \mathcal{C}\vdash \forallT{x}{A}{t} }{ \mathcal{C} \vdash u : A}{ \mathcal{C} \vdash \subst{t}{x}{u} }{\sttstyle{S \(\forall_E\)}}}

  \addrule{\RulePP{ \mathcal{C},x:A \vdash t }{x \not\in \mathcal{C}}{ \mathcal{C} \vdash \forallT{x}{A}{t} }{\sttstyle{S \(\forall_I\)}}}
  \addrule {\RulePP{ \mathcal{C} \vdash \forallP{X}{\tau}}{FV(A)\subseteq \Gamma}{\mathcal{C} \vdash \tau[X:=A]}{\sttstyle{S \(\forallp_E\)}}}
  \addrule {\RuleP{ \mathcal{C},X\vdash \tau}{\mathcal{C} \vdash \forallP{X}{\tau}}{\sttstyle{S \(\forallp_I\)}}}
  \addrule{\RulePP{\mathcal{C}\vdash t}{t \equiv_{\beta\delta} t'}{\mathcal{C} \vdash t'}{\sttstyle{S conv}}}
\end{rules}

\sttabd{} is sound and type checking is decidable. The proofs are provided in annex of this paper.

\begin{theorem}
  \label{thm:sttsound}
  \sttabd{} is sound: the judgment \(\emptyset \vdash \forallT{x}{\Prop}{x}\) is not provable.
\end{theorem}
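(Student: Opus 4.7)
The plan is to establish consistency by constructing a set-theoretic (Henkin-style) model of \sttabd{} and proving soundness of the proof system with respect to this model. Once \(\forallT{x}{\Prop}{x}\) is shown to be invalid in the model, it cannot be derivable. This is the standard route for Simple Type Theory, and the prenex restriction on polymorphism poses no obstacle: as Coquand's paradox warns, the danger only arises with impredicative polymorphism, which \sttabd{} avoids by construction.

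First I would interpret the types. Fix a set-theoretic universe \(\mathcal{U}\) of nonempty sets. Interpret \(\Prop\) as the two-element set \(\bool = \{\bot,\top\}\), interpret \(A \to B\) as the function space \(\llbracket A\rrbracket \to \llbracket B\rrbracket\), and assign to each type operator \((p:n)\in\Sigma\) an \(n\)-ary operation on \(\mathcal{U}\) (for example, the constant operation returning some fixed nonempty set). Interpret the polytype \(\forallT{X}{}{T}\) as the dependent product \(\prod_{S\in\mathcal{U}} \llbracket T\rrbracket_{X\mapsto S}\); the stratification of monotypes versus polytypes ensures this product is not itself a monotype, preserving predicativity.

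Second I would interpret terms under a valuation \(\rho\). Abstraction and application become set-theoretic abstraction and application; implication \(t\Rightarrow u\) becomes Boolean implication on \(\bool\); the quantifier \(\forallT{x}{A}{t}\) evaluates to \(\top\) iff for every \(a\in\llbracket A\rrbracket_\rho\) the body evaluates to \(\top\); polymorphic abstraction and instantiation mirror the clauses for polytypes. Each declared constant \(cst:T\) is assigned an arbitrary inhabitant of \(\llbracket T\rrbracket\), available because every type denotation is nonempty; each defined constant \(cst = \tau : T\) inherits the denotation of \(\tau\). A soundness lemma, proved by mutual induction on the derivations of \(\mathcal{T}\) and \(\mathcal{P}\), then shows that every well-typed term lives in the interpretation of its type, that \(\beta\delta\)-convertible terms have equal denotations, and that whenever \(\mathcal{C};\Xi\vdash t\) is derivable and the hypotheses in \(\Xi\) evaluate to \(\top\), so does \(t\). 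Applied to a putative derivation of \(\emptyset\vdash\forallT{x}{\Prop}{x}\), the lemma would force \(\bot\in\bool\) to equal \(\top\), which is absurd.

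The main obstacle is book-keeping around the \emph{nonemptiness} of every type denotation, which is what makes the axiomatic constant declaration rule sound: the interpretation of type operators, of polytypes, and of their instantiations must all remain in \(\mathcal{U}\). Once this invariant is set up, the proof-rule cases (conversion, assumption, and the introduction/elimination rules for \(\Rightarrow\), \(\forall\), and \(\forallp\)) are routine classical Henkin-style reasoning.
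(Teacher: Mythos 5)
Your proposal is correct and follows essentially the same route as the paper's own proof: a set-theoretic model with \(\Prop\) interpreted as a two-element set, type operators sent to a fixed nonempty set, polytypes interpreted as products over the universe of monotype denotations, declared constants handled by a choice of inhabitant, and a soundness lemma (by induction on derivations, including preservation of denotation under \(\beta\delta\)-conversion) showing every provable proposition denotes true, so \(\forallT{x}{\Prop}{x}\), which denotes false, is unprovable. The nonemptiness invariant you flag is exactly what the paper's choice function relies on.
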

\begin{proof}
  We construct a set-theoretical model of \sttabd{}.
\end{proof}

\begin{theorem}
  \label{thm:sttdecidable}
  Type checking and proof checking in \sttabd{} are decidable.
\end{theorem}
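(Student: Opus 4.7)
The plan is to reduce both the decidability of the typing judgment $\Sigma;\Gamma \vdash \tau : T$ and of the proof judgment $\Sigma;\Gamma;\Xi \vdash \tau$ to decidability of the conversion relation $\equiv_{\beta\delta}$, combined with a syntax-directed traversal of the derivation rules. I will proceed in three steps: first establish strong normalization of $\beta\delta$-reduction on well-typed terms, then deduce confluence and hence decidability of $\equiv_{\beta\delta}$, and finally give a syntax-directed type- and proof-checking procedure.

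For the first step, I would exploit the fact that $\sttabd$ is Simple Type Theory extended only by prenex polymorphism and non-recursive constant definitions. Prenex polymorphism can be eliminated by a monomorphising translation that replaces each polymorphic constant application $c~A$ by a fresh simply-typed copy indexed by its instantiating monotypes. Strong normalization of pure $\beta$ on the image is then the classical result for STT. The rule \sttstyle{S WF cst defn} requires each defined constant to be fully typed in the empty context before being added to $\Sigma$, so the $\delta$-rewrites are well-founded and acyclic; this yields strong normalization of $\beta\delta$ by the standard argument that orders constants by their declaration time and unfolds them layer by layer.

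For the second step, local confluence of $\beta\delta$ follows by checking the handful of critical pairs: $\beta$ is Church-Rosser, and any $\delta$-redex either commutes with a $\beta$-redex or lives in a disjoint subterm. Combined with strong normalization, Newman's lemma gives full confluence, so $t \equiv_{\beta\delta} t'$ can be decided by reducing both sides to their unique normal forms and comparing them syntactically up to $\alpha$-equivalence. Well-formedness of monotypes, polytypes and contexts is independent of conversion and is immediately decidable by structural recursion on Fig.~\ref{fig:stttyping}.

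For the third step, the rules of Fig.~\ref{fig:stttyping} and Fig.~\ref{fig:sttproof} are essentially syntax-directed. In the typing system, each term constructor determines a unique applicable rule; the only genuine source of non-determinism is the conversion rule \sttstyle{S conv}, but since $\equiv_{\beta\delta}$ is now decidable we can synthesise a unique type (up to conversion) by recursion on $\tau$ and check the required equalities at each application or implication node. Proof checking on a given derivation proceeds by the same pattern, invoking decidable type checking whenever a side condition such as $\mathcal{C} \vdash u : A$ in the $\forall_E$ rule must be verified. The main obstacle I expect is the strong normalization step for combined $\beta\delta$-reduction in the presence of prenex polymorphism; once a clean monomorphisation (or a direct reducibility-candidates argument) is set up, the remaining decidability arguments are routine.
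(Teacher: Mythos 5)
Your overall skeleton is exactly the paper's: both reduce decidability of type and proof checking to decidability of $\equiv_{\beta\delta}$, obtain that from convergence (termination plus confluence) of $\hookrightarrow_{\beta} \cup \hookrightarrow_{\delta}$, and decide conversion by computing normal forms and comparing them up to $\alpha$-equivalence; the syntax-directed checking procedure you spell out in your third step is left implicit in the paper. Where you genuinely diverge is the proof of convergence itself. The paper (Theorem~\ref{thm:convergent} in the annex) gets it by embedding \sttabd{} into System $F_\omega$ with constants --- interpreting $\Prop$ as a type constructor and a type operator $p$ of arity $n$ as $A_1 \to \dots \to A_n \to p$ --- and inheriting convergence from that calculus. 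You instead propose an elementary, self-contained route: monomorphise the prenex polymorphism down to classical STT, stratify $\delta$ by declaration order (which is sound, since \sttstyle{S WF cst defn} only permits a constant to be defined over the previously well-formed signature, so $\delta$-unfolding is acyclic), and conclude via Newman's lemma. Both work. The paper's embedding is shorter but leans on a heavier external normalization result; your version avoids $F_\omega$ entirely but has one step you gloss over: the instantiating monotypes in $c~A$ may contain type variables bound by an enclosing $\abs{X}{\cdot}$ or $\forallp$, and $\delta$-unfolding a polymorphic constant creates type-level redexes, so the ``fresh simply-typed copies'' are not indexed by closed monotypes alone. This is repairable --- prenex polymorphism admits the standard argument treating type variables as opaque base types, since types contain no terms and type instantiation creates no new term-level redexes --- but it is the part of your proof that would actually need to be written out carefully.
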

\begin{proof}
  We only have to show that \(\equiv_{\beta\delta}\) is decidable. This property follows from the fact that \(\hookrightarrow_{\beta} \cup \hookrightarrow_{\delta}\) is a convergent term rewriting system.
\end{proof}
\subsubsection{Equality in \sttabd{}}

\begin{wrapfigure}{r}{0.5\textwidth}
\scalebox{0.8}{
\begin{mathpar}
  \inferrule* [Right=S \(\forallp_I\)]
  {
    \inferrule*[Right=S \(\forall_I\)]
    {
      \inferrule*[Right=S conv]
      {
        \inferrule*[Right=S \(\forall_I\)]
        {
          \inferrule*[Right=S \(\Rightarrow_I\)]
          {
            \inferrule*[Right=S assume]
            { }
            {=_{\mathcal{L}}; X, x:X, p:X \to \Prop; P~x \vdash P~x}
          }
          {=_{\mathcal{L}}; X, x:X, p:X \to \Prop; \emptyset \vdash P~x \Rightarrow P~x}
        }
        {=_{\mathcal{L}}; X, x:X; \emptyset \vdash \forallT{P}{X \to \Prop}{P~x \Rightarrow P~x}}
      }
      {=_{\mathcal{L}}; X, x:X;\emptyset \vdash x =_{\mathcal{L}} x}
    }
    {=_{\mathcal{L}};X;\emptyset \vdash \forallT{x}{X}{x =_{\mathcal{L}} x}}
  }
  { =_{\mathcal{L}} ;\emptyset ; \emptyset \vdash \forallP{X}{}{\forallT{x}{X}{x =_{\mathcal{L}} x}}}
\end{mathpar}
}
\end{wrapfigure}
In order to give further insights into \sttabd{}, we give here an example that expresses polymorphic Leibniz equality denoted by \(=_{\mathcal{L}}\) in \sttabd{}. Its type will be \(\forallT{X}{}{X \to X \to \Prop}\) and it can be implemented by the term \(\absT{X}{}{\absT{x}{X}{\absT{y}{X}{\forallT{P}{X \to \Prop}{P~x \Rightarrow P~y}}}}\). From this definition, it is possible to prove that \(=_{\mathcal{L}}\) is reflexive, which is expressed by the statement \( \forallP{X}{}{\forallT{x}{X}{x =_{\mathcal{L}} x}} \), proved on the right.

\subsection{Dedukti[\sttabd]}
\label{sec:deduktistt}

The purpose of this section is to describe a shallow encoding of \sttabd{} in Dedukti. Thanks to HOAS~\cite{Pfenning:1988:HAS:960116.54010}, such a shallow encoding exists in Dedukti. In Figure~\ref{fig:sigtypes}, we present the signature used to encode terms from \sttabd{}. The translation function is given in annexes.

For the types, we declare in Dedukti two symbols \(type\) and \(ptype\) that are used to encode monomorphic types and polymorphic types of \sttabd{}. Therefore every type of \sttabd{} will be encoded as an object of Dedukti. That is why we use the symbol \(term\) to encode a \sttabd{} type to a Dedukti type. We add in the signature a symbol \(p\) to coerce a monomorphic type to a polymorphic type. Then we need symbols to represent type constructors of \sttabd{}: The Dedukti's symbol \(prop\) encodes \(\Prop\) while \(arr\) encodes \(\to\). Each type constructor of arity \(n\) is encoded as a new Dedukti symbol of type \(type \to \cdots \to type\) with \(n+1\) occurrences of \(type\). Finally, to encode \(\forall\) at the type level, we use the Dedukti symbol \(forallK_{type}\).

For the terms, since the encoding is shallow, we do not need symbols for abstractions and applications. In contrast, we need the two symbols \(forall\) and \(impl\) that encode respectively the connectives \(\forall\) and \(\Rightarrow\). Then, we add the symbol \(forallK_{prop}\) to encode polymorphic propositions. To encode a proposition into a Dedukti type, we use the symbol \(proof\). Finally, rewrite rules transform a deep representation of \sttabd{} syntax to a shallow one, for instance the Dedukti rule \(term~(p~(arr~l~r)) \hookrightarrow term~l \to term~r\) allows the Dedukti term \(term~(p~(arr~l~r))\) to be the type of a Dedukti's abstraction.

\begin{figure}
  \centering
  \begin{equation*}
  \begin{aligned}
    \mathsf{type}  &: \mathbf{Type} \\
    \mathsf{arr}   &: \mathsf{type \to type \to type}\\
    \mathsf{prop}  &: \mathsf{type}\\
    \mathsf{ptype} &: \mathbf{Type} \\
    \mathsf{p}     &: \mathsf{type \to ptype}\\
    \mathsf{term}  &: \mathsf{ptype \to \mathbf{Type}}
  \end{aligned}
  \qquad
  \begin{aligned}[c]
    \mathsf{impl}  &: \mathsf{term~(p~(arr~prop~(arr~prop~prop))}\\
    \mathsf{forallK_{type}} &: \mathsf{(type \to ptype) \to ptype}\\
    \mathsf{proof} &:  \mathsf{term~(p~prop) \to \mathtt{Type}}\\
    \mathsf{forall} &:  \mathsf{(t:~type) \to~term~(p~(arr~(arr~t~prop)~prop))}\\
    \mathsf{forallK_{prop}} &: \mathsf{(type \to term~(p~prop)) \to term~(p~prop)}
  \end{aligned}
\end{equation*}
\begin{equation*}
    \begin{aligned}[c]
    \mathsf{term~(p (arr~l~r))}&\rw \mathsf{term~(p~l) \to term~(p~r)}\\
    \mathsf{term~(forallK_{type}~f)} &\rw \mathsf{(x:type) \to term~(f~x)}\\
    \mathsf{proof~(forall~t~f)} &\rw \mathsf{(x: term~(p~t)) \to proof~(f~x)} \\
    \mathsf{proof~(impl~l~r)} &\rw \mathsf{proof~l \to proof~r}\\
    \mathsf{proof~(forallK_{prop}~f)} &\rw \mathsf{(x:type) \to proof~(f~x)}
  \end{aligned}
\end{equation*}
  \caption{Signature for \sttabd{} in Dedukti}
  \label{fig:sigtypes}
\end{figure}

\subsubsection{A proof of reflexivity in Dedukti[\sttabd{}]:}

The translation of Leibniz equality in Dedukti[\sttabd] is as follow. First, the type of \(=_{\mathcal{L}}\)\footnote{also written \(leibniz\) in its prenex form} is translated as:
\[term~(forallK_{type}~(\absT{X}{}{arr~X~(arr~X~prop)})\]
then its definition is translated as
\[\absT{A}{}{\absT{x}{term~A}{\absT{y}{term~A}{\forallT{P}{term~(arr~A~prop)}{impl~(P~x)~(P~y)}}}}\]
Finally, the proof of \texttt{refl} is translated as
\[\absT{A}{}{\absT{x}{term~A}{\absT{P}{term~(arr~A~prop)}{\absT{h}{proof~(P~x)}{h}}}}\]
that is of type
\[proof~(forallK_{prop}~(\absT{X}{}{\forallT{x}{term X}{leibniz~X~x~x}}))\]
which is the translation of
\[\forallP{X}{}{\forallT{x}{X}{x =_L x}}\]

\section{OpenTheory}
\label{sec:otstt}

HOL is a logic that is implemented in several systems with some minor differences. OpenTheory~\cite{hurd2011} is a tool that allows to share proofs between several implementations of HOL. Since we are targeting OpenTheory, we will mostly refer to the logic defined by OpenTheory. The logic behind OpenTheory comes from \qzero{}~\cite{DBLP:books/daglib/0070479}, a classical logic taking only equality as a primitive logical connective. Terms are those of Simply Typed Lambda Calculus with an equality symbol while the type system extends the one of the Simply Typed Lambda Calculus by declaring type operators and prenex polymorphism. The syntax and the proof system of OpenTheory can be found respectively in Fig.~\ref{fig:asshole} and Fig.~\ref{opentheory}. The syntax being very similar to the one of \sttabd{}, we have omitted the definitions of typing judgments.
\begin{figure}
  \centering
  \begin{tabular}{lccl}
    \textbf{Type operators} & p & & \\
    \textbf{Type variables} & X & & \\
    \textbf{Types} & \(A,B\) & \(\defn\) & X~|~\(\Prop\)~|~\(A \to B\)~|~\(p~A_1~...~A_n\)\\
    \hline
    \textbf{Terms variables} & x & & \\
    \textbf{Terms} & t,u & \(\defn\) & x~|~c~|~\(\absT{x}{A}{t}\) ~|~ \(\app{t}{u}\) ~|~\(t = u\)~|~ \(c\) \\
    \hline
    \textbf{Typing Judgment} & \(\mathcal{T}\) & \(\defn\) & \(\Sigma;\Gamma \vdash \tau : T\)\\
    \textbf{Proof Judgment} & \(\mathcal{P}\) & \(\defn\) & \(\Sigma;\Gamma;\Xi \vdash \tau\)\\
  \end{tabular}
  \caption{OpenTheory syntax}
  \label{fig:asshole}
\end{figure}

\begin{rules}{OpenTheory proof system}{opentheory}
  \addrule{\Rule{t \vdash t}{\otstyle{OT assume}}}
  \addrule{\RuleP{\Gamma \vdash t = u}{\Gamma \vdash \lambda v.t = \lambda v.u}{\otstyle{OT absThm}}}
  \addrule{\RulePP{\Gamma \vdash t_1 = u_1}{\Delta \vdash t_2 = u_2}{\Gamma \cup \Delta \vdash t_1~t_2 = u_1~u_2}{\otstyle{OT appThm}}}
  \addrule{\Rule{t_1, \dots, t_n \vdash u}{\otstyle{OT axiom}}}
  \addrule{\RuleP{c = t \in \Sigma}{\vdash c = t}{\otstyle{OT delta}}}
  \addrule{\Rule{\vdash (\lambda x.t) u = t[u/x]}{\otstyle{OT beta}}}
  \addrule{\RulePP{\Gamma \vdash t}{\Delta \vdash u}{(\Gamma - u) \cup (\Delta - t) \vdash t = u}{\otstyle{OT deductAntiSym}}}
  \addrule{\RulePP{\Gamma \vdash t }{\Delta \vdash t = u}{\Gamma \cup \Delta \vdash u}{\otstyle{OT eqMp}}}
  \addrule{\RulePP{\Gamma \vdash t}{\Delta \vdash u}{(\Gamma - u) \cup \Delta \vdash t}{\otstyle{OT proveHyp}}}
  \addrule{\Rule{\vdash t = t}{\otstyle{OT refl}}}
  \addrule{\RuleP{\Gamma \vdash t}{\Gamma\sigma \vdash t\sigma}{\otstyle{OT subst}}}
  \addrule{\RuleP{\Gamma \vdash t = u}{\Gamma \vdash u = t}{\otstyle{OT sym}}}
  \addrule{\RulePP{\Gamma \vdash t_1 = t_2}{\Delta \vdash t_2 = t_3}{\Gamma \cup \Delta \vdash t_1 = t_3}{\otstyle{OT trans}}}
\end{rules}

\subsection{\sttabd{} vs OpenTheory}

One may notice that \sttabd{} and \ot{} are quite similar. However, there are some differences that makes the translation from \sttabd{} and \ot{} not so easy:

\begin{itemize}
\item Terms in \ot{} are only convertible up to \(\alpha\) conversion while in \sttabd{} it is up to \(\alpha,\beta,\delta\) conversion
\item All the connectives of \ot{} are defined from the equality symbol, while in \sttabd{} they are defined from \(\forall\) and \(\Rightarrow\) connectives
\item Prenex polymorphism in \ot{} is implicit: All free type variables in \ot{} are implicitly quantified while in \sttabd{} all quantifications are explicit
\end{itemize}
These differences lead to three different proof transformations:
\begin{itemize}
\item Encode the \(\forall\) and \(\Rightarrow\) connectives using the equality of OpenTheory
\item Explicit each application of the conversion rule
\item Finally, get rid of the type quantifier
\end{itemize}
OpenTheory is a classical logic while \sttabd{} is intuitionistic. This is not an issue here since intuitionistic logic is a fragment of classical logic.

\subsection{From \sttabd{} to OpenTheory}
\subsubsection{Encoding \(\forall\) and \(\Rightarrow\) using the equality}
\label{sec:forallimpl}
A first idea to encode \sttabd{} proofs in OpenTheory would be to axiomatize all the rules of \sttabd{} and then translate the proofs using these axioms. But translating a rule to an axiom in OpenTheory requires the use of implication. Since \ot{} does not know what an implication is, such axioms would not be usable since it would not be possible to use the modus ponens to eliminate the implication itself. Therefore, one must find an encoding of the \(\forall\) and \(\Rightarrow\) connectives such that the rules of \sttabd{} are admissible. Such encoding is already known from \(\mathcal{Q}_0\)~\cite{DBLP:books/daglib/0070479}. This encoding is presented below and uses two other connectives that are \(\top\) and \(\land\) that can be defined as axiom in OpenTheory:

\begin{align*}
    \top & = \abs{x}{x} & x \Rightarrow y &= (x \land y) = x \\
    x \land y &= \abs{f}{\app{\app{f}{x}}{y}} = \abs{f}{\app{\app{f}{\top}}{\top}} &  \forall x. P& = \absT{x}{}{P} = \abs{x}{\top}
\end{align*}

We stress here that it is really important to axiomatize these definitions and not to define new constants. The difference is that it will be possible to instantiate later these connectives by the \textit{true} connectives of HOL as long as these axioms can be proved regardless of their definition in HOL. These axioms are not too strong to satisfy because in HOL, extensionality of predicates\footnote{\(\forall P, \forall Q, (\forall x, P~x = Q~x \Rightarrow P = Q)\)} is admissible. Using this encoding, it is possible to derive all the rules of \sttabd{} in \ot{} using the four axioms above. Below, we prove the admissibility of the \textsc{S \(\forall_{I}\)} rule
\begin{mathpar}
  \inferrule*[Right=S \(\forall_{I}\)]
  {
    \inferrule*
    {
      \Pi
    }
    {\mathcal{C},x:A \vdash t }
    \\
    \inferrule*
    {}
    {x \not\in \mathcal{C}}
  }
  {\mathcal{C} \vdash \forallT{x}{A}{t} }
\end{mathpar}
using the derivation tree below, \(\Gamma\) is the translation of \(\mathcal{C}\) in OpenTheory\footnote{In OpenTheory, free variables such as \(x\) do not need to appear inside the context.}.

  \begin{mathpar}
  \inferrule* [sep=8pt]
  {
    \inferrule*
    {
      \inferrule*
      {
        \inferrule*
        { \Pi }
        { \Gamma \vdash t }
        \\
        \inferrule*
        { }
        { \Gamma \vdash \top }
      }
      {\Gamma \vdash t = \top}
    }
    {\Gamma \vdash \abs{x}{t} = \abs{x}{\top}}
    \\
    \inferrule*
    {
      \inferrule*
      { }
      {\Gamma \vdash \forallT{x}{}{t} = (\abs{x}{t} = \abs{x}{\top})}
    }
    {\Gamma \vdash (\abs{x}{t} = \abs{x}{\top}) = \forallT{x}{}{t}}
  }
  {\Gamma \vdash \forallT{x}{}{t}}
\end{mathpar}

The right branch is closed thanks to the axiom defining \(\forall\).

All the rules of \sttabd{} can be derived in a similar way. At the end of this translation, the syntax of the term is changed: \(=\) becomes a new connective, while \(\forall\) and \(\Rightarrow\) become defined constants.
\subsubsection{Eliminate \(\beta,\delta\) reductions}

In \sttabd{}, the terms \(\forallP{X}{\forallT{x}{X}{x =_{\mathcal{L}} x}}\) and \(\forallP{X}{\forallT{x}{X}{\forallT{P}{}{P~x \rightarrow P~x}}}\) are convertible, but not in OpenTheory. The convertibility test in \sttabd{} will unfold the definition of \(=_{\mathcal{L}}\) once, then it will apply twice a \(\beta\)-reduction. However, in OpenTheory, it is possible to prove
\[\left(\forallP{X}{\forallT{x}{X}{x =_{\mathcal{L}} x}}\right) = \left(\forallP{X}{\forallT{x}{X}{\forallT{P}{}{P~x \rightarrow P~x}}}\right)\]
The purpose of this section is to explain how it is possible to derive a proof of \(t = t'\) in OpenTheory when \(t \equiv_{\beta\delta} t'\) in \sttabd{}. The decidability of type checking in \sttabd{} relies on the decidability of the conversion rule~\textsc{S Conv}. Since the term rewriting system defined by \(\hookrightarrow_{\beta}\) and \(\hookrightarrow_{\delta}\) is convergent, we can decide whether \(t \equiv_{\beta\delta} u\) by computing their normal forms \(t'\) and \(u'\), then checking they are equal up to \(\alpha\)-conversion. OpenTheory has two rules to handle \(\beta\) and \(\delta\) conversion:
\begin{mathpar}
  \inferrule*[Right=delta]
  { }
  { \Gamma \vdash c = t}
  \and
  \inferrule*[Right=beta]
  { }
  { \Gamma \vdash \app{\abs{x}{t}}{u} = \subst{t}{x}{u}}
\end{mathpar}
Hence, one rewrite step will be translated as an equality. The same is true for a sequence of rewrite steps thanks to transitivity of equality. Therefore, the main difficulty is to show how to derive the OpenTheory judgment \(t = u\) from the \sttabd{} judgment \(t \hookrightarrow_{\beta\delta}^* u\).

In general, the \textsc{OT beta} and \textsc{OT delta} rules will be applied inside a term. Thus, we need to show that for any context \(C\), the rule below is admissible:
\begin{mathpar}
  \inferrule*[Right=ctxrule]
  { \Gamma \vdash t \hookrightarrow_{\beta\delta} u }
  { \Gamma \vdash C[t] \hookrightarrow_{\beta\delta} C[u] }
\end{mathpar}
\noindent the base case being either the rule \textsc{OT beta} or \textsc{OT delta}. In our setting, contexts can be defined by the following grammar:

\[ C ::= \cdot~|~ C~u ~|~ t~C ~|~ \abs{x}{C} ~|~ \forallP{X}{}{C} ~|~ C \Rightarrow u ~|~ t \Rightarrow C ~|~ \forallT{x}{A}{C} \]

Notice that our definition of contexts does not depend on the previous translation. However, to prove the admissibility of the rule \textsc{ctxrule} for the \(\Rightarrow\) case for example, we will need to use its definition from equality.

\begin{theorem}
  For every context \(C\), the rule \textsc{ctxrule} is admissible.
\end{theorem}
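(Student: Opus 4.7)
The plan is to proceed by structural induction on the context $C$, showing that a proof of $\Gamma \vdash t \hookrightarrow_{\beta\delta} u$ in \sttabd{} can be turned into a proof of $\Gamma \vdash C[t] = C[u]$ in OpenTheory. In the base case $C = \cdot$, the single-step reduction is either a $\beta$-step or a $\delta$-step, which the rules \textsc{OT beta} and \textsc{OT delta} discharge directly.

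For the pure $\lambda$-calculus constructors, the induction is routine. If $C = C'~u'$ (respectively $C = t'~C'$), the induction hypothesis gives the equality on the reducing side, \textsc{OT refl} supplies reflexivity on the other, and \textsc{OT appThm} combines the two into $\vdash C'[t]~u' = C'[u]~u'$. If $C = \abs{x}{C'}$, the induction hypothesis is fed directly into \textsc{OT absThm} to obtain $\vdash \abs{x}{C'[t]} = \abs{x}{C'[u]}$.

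The more interesting cases are the connective contexts $C' \Rightarrow u'$, $t' \Rightarrow C'$, $\forallT{x}{A}{C'}$, and $\forallP{X}{}{C'}$. Here the key is to exploit the encoding of Section~\ref{sec:forallimpl}: after that translation, $\Rightarrow$ and $\forall$ are not primitives but defined constants expressed in terms of $=$, $\lambda$ and application. For instance, $\forallT{x}{A}{C'[t]}$ is literally the application of the \emph{forall} constant to $\abs{x}{C'[t]}$, so combining \textsc{OT absThm} under the abstraction with an \textsc{OT refl} on the constant and a final \textsc{OT appThm} yields the required equality. The implication cases propagate analogously through the definition $x \Rightarrow y = ((x \land y) = x)$, using \textsc{OT appThm} on both arguments of the top-level equality. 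In every case, a sequence of rewrite steps is then glued together by \textsc{OT trans}.

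The main obstacle is the $\forallP{X}{}{C'}$ case, because OpenTheory has no explicit binder for type variables and so there is no analogue of \textsc{OT absThm} at the type level. The intended resolution is that type variables are implicitly universally quantified in OpenTheory judgments, so a derivation of $\Gamma \vdash C'[t] = C'[u]$ produced by the induction hypothesis \emph{already} serves as a derivation under $\forallP{X}{}{-}$; the admissibility is witnessed by an identity transformation of the derivation, with only bookkeeping on the free type variables. Making this formal requires tracking which type variables are free at each node of the derivation but adds no new proof-theoretic content beyond the other cases.
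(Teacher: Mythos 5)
Your overall skeleton matches the paper's: induction on the structure of \(C\), base case closed by \textsc{OT beta} or \textsc{OT delta}, application and abstraction handled by \textsc{OT appThm} and \textsc{OT absThm}, and sequences of rewrite steps glued together by \textsc{OT trans}. Where you diverge is in the connective cases. The paper does \emph{not} push the equality through the unfolded definitions by congruence; for each of \(\Rightarrow\), \(\forall\) and \(\forallp\) it instead derives the two sequents \(\Gamma, p \Rightarrow q \vdash p' \Rightarrow q'\) and its converse using the \emph{derived} introduction and elimination rules of the previous subsection (together with \textsc{OT eqMp} and \textsc{OT sym}), and closes with \textsc{OT deductAntiSym}. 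Your congruence route for \(C' \Rightarrow u'\), \(t' \Rightarrow C'\) and \(\forallT{x}{A}{C'}\) is legitimate and arguably more economical: after the first phase \(\Rightarrow\) and \(\forall\) are axiomatized constants, so \(p \Rightarrow q\) is an application of a constant to \(p\) and \(q\) and \(\forallT{x}{A}{t}\) is that of the \(\forall\) constant to \(\abs{x}{t}\); \textsc{OT refl} on the constant plus \textsc{OT appThm} (and \textsc{OT absThm} under the binder) then suffices, without even unfolding \(x \Rightarrow y = ((x \land y) = x)\). The paper's deductAntiSym route buys uniformity — the same pattern works for every connective, including the type quantifier — at the price of longer derivations.

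The genuine gap is the \(\forallP{X}{}{C'}\) case. At the point where \textsc{ctxrule} is invoked — the second phase, before type quantifiers are suppressed — \(\forallP{X}{}{t}\) is still an explicit piece of syntax, so a derivation of \(\Gamma \vdash C'[t] = C'[u]\) is \emph{not} literally a derivation of \(\Gamma \vdash \forallP{X}{}{C'[t]} = \forallP{X}{}{C'[u]}\): the two equated terms differ syntactically from the unquantified ones. Your "identity transformation" silently performs the quantifier erasure of Section~\ref{sec:quantifiers} one phase early, and you would then owe an argument that this erasure commutes with the elimination of \(\beta\delta\)-steps and with the freshness bookkeeping on bound type variables that the erasure requires. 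The paper avoids this by treating \(\forallp\) like any other connective here: it derives the congruence rule for \(\forallP{X}{}{\cdot}\) from the \(\forallp\) introduction and elimination rules together with \textsc{OT eqMp} and \textsc{OT deductAntiSym}, and only removes the quantifier in the subsequent phase. Either fix — reordering the phases with an explicit commutation argument, or supplying the deductAntiSym derivation for this case — is needed before your proof is complete.
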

\begin{proof}
  This is done inductively on the structure of \(C\). There are already two contextual rules in OpenTheory for equality  to handle abstractions and applications. We need to derive the other contextual rules to handle \sttabd{} connectives that are: \(\forallp, \Rightarrow\) and \(\forall\).
  We show here the admissibility of the contextual rule for \(\Rightarrow\) but the derivations for all the other rules are in annex.
  \begin{mathpar}
    \inferrule*
    {
      \inferrule*
      {}
      {\Gamma \vdash p = p'}
      \\
      \inferrule*
      {}
      {\Gamma \vdash q = q'}
    }
    {\Gamma \vdash p \Rightarrow q = p' \Rightarrow q'}
  \end{mathpar}

  \begin{mathpar}
    \inferrule*
    {
      \inferrule*
      {
        \inferrule*
        {
          \inferrule*
          {
            \inferrule*
            { }
            {\Gamma, p\Rightarrow q \vdash p \Rightarrow q}
            \\
            \inferrule*
            {
              \inferrule*
              { }
              {\Gamma, p' \vdash p'}
              \\
              \inferrule*
              {
                \inferrule*
                {}
                {\Gamma \vdash p = p'}
              }
              {\Gamma \vdash p' = p}
            }
            {\Gamma, p' \vdash p}
          }
          {\Gamma, p\Rightarrow q, p' \vdash q}
          \\
          \inferrule*
          {}
          {\Gamma \vdash q = q'}
        }
        {\Gamma, p \Rightarrow q, p' \vdash q'}
      }
      {\Gamma, p \Rightarrow q \vdash p' \Rightarrow q'}
      \\
      \inferrule*
      {}
      {\vdots}
    }
    {\Gamma \vdash (p \Rightarrow q) = (p' \Rightarrow q')}
  \end{mathpar}
  Half of the proof is omitted here but the derivation tree is symmetric. This rule can be used to solve two context cases. In the case where \(C \Rightarrow t\), we instantiate \(q\) and \(q'\) by \(t\). Hence, the right premise is closed by the OpenTheory rule \textsc{OT refl}. The case \(t \Rightarrow C\) can be instantiated in a symmetric way. All the other cases can be derived in a similar way.
\end{proof}

\subsubsection{Suppressing type quantifiers}
\label{sec:quantifiers}
OpenTheory implicitly quantifies over free types variables while in \sttabd{} this is done explicitly thanks to the \(\forall\) on types. This implies that substitution in \sttabd{} is handled by the system while in OpenTheory, the user has to manage substitution to avoid capturing free type variables. For example, the following type in \sttabd{}
\(\forallT{Y}{}{Y\to X}[X:=Y]\) is equal to \(\forallT{Z}{}{Z\to Y}\) while in OpenTheory, the same type \(Y\to X\) is equal to \(Y\to Y\) using the \textsc{OT subst} rule with the substitution \(X \mapsto Y\). This mechanism forces us to replace each bound variable by a fresh variable each time the bound variable is substituted. In \sttabd{}, there are two rules that are concerned by this: \textsc{S cst app} and \textsc{S \(\forallp_E\)}. Renaming bound variables can be done easily using the \textsc{subst} rule of OpenTheory. For example, the rule \textsc{S \(\forallp_E\)}
\begin{mathpar}
  \RulePP{ \mathcal{C} \vdash \forallP{X}{\tau}}{FV(A)\subseteq \Gamma}{\mathcal{C} \vdash \tau[X:=A]}{S \(\forallp_E\)}
\end{mathpar}
is translated as the OpenTheory proof
\begin{mathpar}
  \inferrule*[Right=subst]
  {
    \inferrule*[Right=subst]
    {\mathcal{C} \vdash \tau \\ Z~fresh}
    {\mathcal{C} \vdash \tau[X:=Z]}
  }
  {\mathcal{C} \vdash \tau[Z:= A]}
\end{mathpar}

The same thing can be done for the rule \textsc{cst app} each time a constant is applied to a type inside the definition of a constant for example. The rule \textsc{S \(\forallp_I\)} is just removed because there is no need to introduce a quantifier anymore.
\section{From Dedukti[\sttabd{}] to Coq and Matita}
\label{sec:tocoq}
Going from \sttabd{} to Coq or Matita is easy since the Calculus of Inductive Constructions with universes can be seen as an extension of \sttabd{}. Only three universes are needed for the translation: the impredicative universe \(Prop\) for \(\Prop\), \(Type_1\) for monotypes and \(Type_2\) for polytypes. The three \textit{forall} constructions of \sttabd{}, the arrow on types and the implication all translates to an instantiation of the product rule of the Calculus of Inductive of Constructions. Introduction rules can be implemented as abstractions while elimination rules as applications. Finally, type operators can be encoded as parameters of type : \(Type_1 \to \cdots \to Type_1\).  As an example, we show the result of our reflexivity proof from \sttabd{} to Coq\footnote{\(Type_0\) is also denoted Prop in Coq}. Using Coq floating universes, we omit indices for universes. The equality \(=_{\mathcal{L}}\) will be translated as

  \begin{minted}{coq}
    Definition =_L : forall X:Type, X -> X -> Prop :=
      fun (X:Type) (x y:X) =>
         forall (P:X -> Prop), P x -> P y.
  \end{minted}
  while the proof of reflexivity will be translated as the following definition

  \begin{minted}{coq}
    Definition refl_= : forall X:Type, forall x:X, x =_L x :=
      fun X:Type => fun x:X => fun h:(P x) => h.
  \end{minted}

\section{The arithmetic library}
\label{sec:library}
We have implemented these transformations to an arithmetic library that comes from Matita~\cite{matitaarith}. From this library, we have extracted all the lemmas needed to prove the Fermat's little theorem (about \(300\) lemmas). In this library, we can find basic definitions of operators such as \(+,\times\) but also the definition of a permutation over natural numbers or the definition of \textit{big} operator such as \(\Sigma\) or \(\Pi\). This library also proves basic results related to these definitions such as the commutativity of \(+\) or basic results related to prime numbers. In table~\ref{tab:results}, we give some results related to the export of this library to OpenTheory, Coq and Matita.

\begin{table}
  \centering
  \begin{tabular}{ccccc}
              & Dedukti[STT] & OpenTheory & Coq & Matita\\
    size (mb) & 1.5 & 41 & 0.6 & 0.6 \\
    translation time (s) & - & 18 & 3 & 3 \\
    checking time (s) & 0.1 & 13 & 6 & 2
  \end{tabular}
  \caption{Arithmetic library translation}
  \label{tab:results}
\end{table}

These results show that the type checking time in OpenTheory is longer than in Dedukti, Coq or Matita. We suppose that this is mostly due to making the \(\beta\) and \(\delta\) conversions explicit. In order to illustrate the usability of the translated library, we give below the translation of Fermat's little theorem in Coq:

\begin{minted}{coq}
Definition congruent_exp_pred_SO :
  forall p a : nat,
  prime p -> Not (divides p a) -> congruent (exp a (pred p)) (S O) p.
\end{minted}

The constants \mintinline{coq}{prime}, \mintinline{coq}{congruent} and \mintinline{coq}{pred} come with a definition while the constants \mintinline{coq}{exp}, \mintinline{coq}{Not}, \mintinline{coq}{O} and \mintinline{coq}{S} are axiomatized and should be defined by the user. Our tool produces a \textit{functor} that the user should instantiate whose parameters are the axiomatization of those notions. The user should instantiate it with reasonable definitions, proving the axioms. Then the theorem is ready to use. For example, the definition of \mintinline{coq}{exp} has to satisfy the two following axioms:
\begin{minted}{coq}
Axiom sym_eq_exp_body_0 : forall n : nat, (S O) = (exp n O).
Axiom sym_eq_exp_body_S : forall n m : nat, (times (exp n m) n) = (exp n (S m)).
\end{minted}

The following definition (that comes from the standard library) satisfy those definitions:

\begin{minted}{coq}
Fixpoint exp (n m : nat) : Datatypes.nat :=
  match m with
  | O => S O
  | S m => n * exp n m
  end
\end{minted}

For this arithmetic library, one has to define about \(40\) constants and prove about \(80\) axioms. All the constants definitions can be guessed from their name or from the axioms they have to satisfy, and hence the axioms are then easy to prove. This instantiation has been made in Coq~\cite{sttforall}.

 \section{Related Work}
Cauderlier and Dubois already used Dedukti for interoperability in~\cite{DBLP:conf/itp/CauderlierD17}. Their goal was to prove the sieve of Eratosthenes using HOL and Coq in combination. The main advantage of their work is that there is no need to export proofs outside the logical framework, instead everything is checked in Dedukti. However, mathematical objects in Dedukti, such as natural numbers, may have different representation, and therefore this approach may require theorems to transfer results about one representation to results about another representation.

In~\cite{DBLP:conf/itp/KellerW10}, Keller and Werner made a translation from HOL Light to Coq. Despite the fact that their source logic and their target logic is different from ours, they did not use any logical framework.

 OpenTheory~\cite{hurd2011} in itself is an interoperability tool between the HOL family provers. However, OpenTheory is focused for systems that all implement a variant of Higher-Order Logic while this work aims to be more general.

 Beluga~\cite{Pientka:POPL08} is an extension of LF that handles open terms thanks to contextual types. Beluga aims to be useful for interoperability since it is easier to write proof transformations in it.

 The Foundational Proof Certificate project~\cite{DBLP:conf/cade/ChihaniMR13} aims at defining a generic methods for checking proofs. The approach seems more tuned towards self-contained proofs produced by, e.g.,~automated theorem provers, rather than libraries developed in proof assistants and rich logics developed in the rich logics of proof assistants.

 \section{Conclusion}
 \label{sec:conclusion}
In this paper, we showed how \sttabd{} is a simple logic that can be easily represented in the logical framework Dedukti and is powerful enough to express arithmetic proofs. We defined translations from \sttabd{} to other systems such as OpenTheory and implemented these translations from Dedukti. We applied it to an arithmetic library containing a proof of Fermat's little theorem. The differences between OpenTheory and \sttabd{} reveal three difficulties which we addressed in different phases of the translation. In contrast, we showed how the translation to Coq and Matita is easy since \sttabd{} can be seen as a subsystem of the Calculus of Inductive Constructions.

We would like to export this library to other proof systems such as PVS or Agda. While for Agda, the translation should be similar to the one of Coq or Matita, for PVS this is a challenge since there is no proof term but only tactics. In other word, each rule should be translated by an application of one or more tactics. We are also interested to import more proofs in Dedukti[\sttabd{}] that could then be exported.

Finally, we hope that this work is the beginning of a process that could lead to a standardization of libraries, starting with the arithmetic one (naming conventions, constants definitions or statement of important lemmas).


\bibliographystyle{eptcs}
\bibliography{bibliography}

\ifundefinedcolor{longpaper}{}
{
\newpage
\section{Annexes}

  \subsection*{Proof of theorem  \ref{thm:sttsound}}
  \begin{proof}
    To prove the consistency of that logic, we construct a model of \sttabd{} in set theory such that if \(t : T\) then \( [t] \in \llbracket T \rrbracket\). The interpretation of \(\Prop\) is a set of two elements \(\{0,1\}\). Then we show that all the rules preserve the truth-value. That is if \(\Gamma \vdash t\) is derivable and and if \(\forall x\in \Gamma, [x] =1\) then \([t]=1\). Under such interpretation, we will see that \([\bot]=0\), hence the consistency of the logic will be proven because if \(\vdash \bot\) can be proved, then \([\bot]=1\) that would lead to a contradiction.

  First, we define \(\mathcal{M}_f\), the interpretations of any monotype, as the smallest set that contains \(\{42\}\), \(\{0,1\}\) and closed under \(\to\). To avoid problems related to names, we suppose that the set of type variables is countable and that all the names are different. We define \(\llbracket \phantom{A} \rrbracket^{\circ}\) inductively on monotypes:

  \begin{align*}
    \llbracket X \rrbracket_{\rho}^{\circ} &\defn& \rho(X)\\
    \llbracket A \to B \rrbracket_{\rho}^{\circ} &\defn& \llbracket A \rrbracket_{\rho}^{\circ} \to \llbracket B \rrbracket_{\rho}^{\circ}\\
    \llbracket \Prop \rrbracket_{\rho}^{\circ} &\defn& \{0,1\}\\
    \llbracket p~A_1~\dots~A_n \rrbracket_{\rho}^{\circ} &\defn& \{42\}\\
  \end{align*}

  Forall set \(S\) and variable \(X\), we denote \(F(S)_{X}\) the set of functions of domain \(\mathcal{M}_f\) such that \( M \mapsto \llbracket S \rrbracket_{\rho, X \to M}\) . Then we define \(\llbracket \phantom{A} \rrbracket\) on polytypes:
  \begin{align*}
    \llbracket A \rrbracket_{\rho} &\defn& \llbracket A \rrbracket_{\rho}^{\circ}\\
    \llbracket \forallT{X}{}{T} \rrbracket_{\rho} &\defn& F(T)_{X}\\
  \end{align*}

  For the terms, we define \([\phantom{A}]^{\circ}\) and \([\phantom{A}]\). The interpretation uses two functions \(\widetilde{\forall} : \mathcal{P}_{\{0,1\}}^* \mapsto \{0,1\}\)  and \(\widetilde{\Rightarrow} : \{0,1\} \times \{0,1\} \times \{0,1\}\) that are defined as expected.
\begin{align*}
\end{align*}

\begin{align*}
  [x]_{\rho,\sigma}^{\circ} &\defn& \sigma(x)\\
  [cst]_{\rho,\sigma}^{\circ} & \defn&
                               \begin{cases}
                                 [t]_{id,id}^{\circ} \text{ when } cst = t \in \Sigma\\
                                 choice(T) \text{ when } cst : T \in \Sigma\\
                               \end{cases}\\
  [t~u]_{\rho, \sigma}^{\circ} & \defn& [t]_{\rho,\sigma}^{\circ}([u]_{\rho,\sigma}^{\circ})\\
  [\absT{x}{A}{t}]_{\rho,\sigma}^{\circ} &\defn & v \in \llbracket A \rrbracket_{\rho} \mapsto [t]_{\rho,(\sigma, x \mapsto v)}^{\circ}\\
  [t \Rightarrow u]_{\rho,\sigma}^{\circ} &\defn & \widetilde{\Rightarrow} [t]_{\rho,\sigma}^{\circ} [u]_{\rho,\sigma}^{\circ}\\
  [\forallT{x}{A}{t}]_{\rho,\sigma}^{\circ} &\defn& \widetilde{\forall} \{ [t]_{\rho,(\sigma, x \mapsto v)}^{\circ} \mid v \in \llbracket A \rrbracket_{\rho}\}\\
  [\absT{X}{}{t}]_{\rho,\sigma}^{\circ} &\defn& M_X \in \mathcal{M}_f \mapsto [t]_{(\rho,X \mapsto M_X), \sigma}^{\circ}\\
\end{align*}

\begin{align*}
  [ t]_{\rho} & \defn & [t]_{\rho, id}^{\circ}\\
  [ \forallP{A}{}{\tau}]&\defn& \widetilde{\forall} \{[\tau]_{\rho,X \mapsto M_X} \mid M_X \in \mathcal{M}_f\} \\
\end{align*}

Note that the choice function can be defined inductively on the construction of \(\mathcal{M}_F\).

We can now prove the following lemma:

\begin{lemma}
  If \(\mathcal{C} \vdash t : T\) then
  \([t]_{[\mathcal{C}]} \in \llbracket T \rrbracket_{[\mathcal{C}]} \)
\end{lemma}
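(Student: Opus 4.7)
The plan is to proceed by induction on the derivation of $\mathcal{C} \vdash t : T$, mutually with a statement about polytypes when the last rule is \textsl{S Poly} or \textsl{S poly intro}, and for every case to read off the conclusion from the matching defining clause of $[\cdot]^\circ$ or $\llbracket \cdot \rrbracket$. Before starting the induction I would prove two auxiliary substitution lemmas: at the type level, $\llbracket T[X:=A] \rrbracket_\rho = \llbracket T \rrbracket_{\rho, X \mapsto \llbracket A \rrbracket_\rho^\circ}$, and at the term level, $[t[x:=u]]_{\rho,\sigma}^\circ = [t]_{\rho,(\sigma, x \mapsto [u]_{\rho,\sigma}^\circ)}^\circ$ together with its analogue for type-variable substitution inside a term. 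Both are routine inductions on syntax, but they are invoked in essentially every elimination case.

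The easy cases are \textsl{S var}, \textsl{S abs}, \textsl{S app}, \textsl{S imp}, \textsl{S forall}, and \textsl{S poly intro}. For \textsl{S var} the claim is immediate from how $\sigma$ is built from $[\mathcal{C}]$; for \textsl{S abs} and \textsl{S app} both sides track the set-theoretic function space $\llbracket A \rrbracket_\rho^\circ \to \llbracket B \rrbracket_\rho^\circ$; and for \textsl{S imp}, \textsl{S forall} we only need that $\widetilde{\Rightarrow}$ and $\widetilde{\forall}$ take values in $\{0,1\} = \llbracket \Prop \rrbracket$. The case \textsl{S poly intro} uses $\llbracket \forallT{X}{}{T} \rrbracket_\rho = F(T)_X$ directly: the induction hypothesis applied pointwise to each $M_X \in \mathcal{M}_f$ exhibits $[\absT{X}{}{t}]_{\rho,\sigma}^\circ$ as a member of $F(T)_X$. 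The \textsl{S cst app} case reduces to the monotype substitution lemma; the side condition $FV(B) \subseteq \mathcal{C}$ is what lets us evaluate $\llbracket B \rrbracket_\rho^\circ$ in the current environment $[\mathcal{C}]$.

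Two cases deserve real care. For \textsl{S cst} we must know that when $cst:T \in \Sigma$, $choice(T)$ really lands in $\llbracket T \rrbracket$, and that when $cst = t : T \in \Sigma$ the definition $[cst]^\circ_{id,id} \defn [t]^\circ_{id,id}$ makes sense. Both are proved by an outer structural induction on $\Sigma$, threaded through the well-formedness judgments $\Sigma~\mathbf{wf}$, $\Sigma \vdash \Gamma~\mathbf{wf}$, and $\Sigma;\Gamma \vdash T~\mathbf{wf}$: at each signature extension the induction hypothesis on the shorter $\Sigma$ combined with rule \textsl{S WF cst defn} yields $[t]^\circ \in \llbracket T \rrbracket$. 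The \textsl{S conv} case (needed for proofs; worth establishing here in any event) requires an invariance lemma: $t \equiv_{\beta\delta} t'$ implies $[t]^\circ_{\rho,\sigma} = [t']^\circ_{\rho,\sigma}$. The $\beta$ direction is exactly the term substitution lemma, and the $\delta$ direction is immediate from the clause $[cst]^\circ \defn [t]^\circ_{id,id}$ whenever $cst = t \in \Sigma$.

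The hard part, as I see it, is the bookkeeping of the mutual induction rather than any single case: four well-formedness judgments feed into one another through \textsl{S cst}, and type-variable capture at \textsl{S cst app} must match the freshness constraint $FV(B) \subseteq \mathcal{C}$ so that the substitution lemma applies without renaming. Once the typing lemma is in place, the proof-level analogue (if $\mathcal{C};\Xi \vdash \tau$ and every hypothesis in $\Xi$ evaluates to $1$ under $[\mathcal{C}]$, then $[\tau]_{[\mathcal{C}]} = 1$) runs by a parallel induction over the proof system, and Theorem~\ref{thm:sttsound} follows: a closed proof of $\forallT{x}{\Prop}{x}$ would force $\widetilde{\forall}\{0,1\} = 1$, contradicting the obvious $\widetilde{\forall}\{0,1\} = 0$.
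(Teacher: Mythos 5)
Your proposal follows essentially the same route as the paper, which disposes of this lemma with a one-line ``by induction on the derivation'' and leaves all the supporting machinery implicit; your elaboration of the substitution lemmas, the signature-well-formedness induction for the constant cases, and the $\beta\delta$-invariance property matches the auxiliary facts the paper lists separately around the lemma. Nothing in your sketch conflicts with the paper's development, so this is a correct and more detailed rendering of the same argument.
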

\begin{proof}
  By induction on the derivation of \(t\).
\end{proof}

One can also check the following properties:
\begin{itemize}
\item the interpretation of \(\forallT{x}{\Prop}{x}\) is \(0\)
\item the interpretation of \(\forallT{x}{\Prop}{x \Rightarrow x}\) is \(1\)
\item if \(t \rightarrow_{\beta} u\) then \([t]=[u]\)
\item if \(t \rightarrow_{\delta} u\) then \([t]=[u]\).
\item if \(t \equiv_{\beta,\delta} u\) then \([t] = [u]\).
\end{itemize}
\begin{lemma}
  If \(\mathcal{C} \vdash t\) and assuming for every \(x:\Prop \in \mathcal{C}\), \([x]=1\) then
  \([t]_{[\mathcal{C}]} = 1 \)
\end{lemma}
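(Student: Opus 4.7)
The plan is to prove the lemma by induction on the derivation of the proof judgment $\mathcal{C} \vdash t$, going through each of the eight rules in Figure~\ref{fig:sttproof}. The structure mirrors the previous typing lemma, but tracks truth values rather than membership in a semantic domain. Throughout, I write $[\mathcal{C}]$ for the combined environment $(\rho,\sigma)$ induced by the typing part of $\mathcal{C}$, and the hypothesis ``$[x]=1$ for every $x:\Prop \in \mathcal{C}$'' is carried along silently and extended as the context grows.

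The easy cases come first. For \textsc{S assume}, $t$ is itself a propositional hypothesis in $\mathcal{C}$, so $[t]=1$ by the ambient assumption. For \textsc{S $\Rightarrow_E$}, the induction hypothesis gives $[t]=1$ and $[t \Rightarrow u]=1$; unfolding $\widetilde{\Rightarrow}$ forces $[u]=1$. For \textsc{S $\Rightarrow_I$}, we must show $\widetilde{\Rightarrow}\,[t]\,[u]=1$: if $[t]=0$ the implication is immediate from the definition of $\widetilde{\Rightarrow}$, and if $[t]=1$ then $\mathcal{C},t$ satisfies the hypothesis of the lemma so the induction hypothesis yields $[u]=1$. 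The rule \textsc{S conv} is handled directly by the bulleted property already established, namely that $[t]=[u]$ whenever $t \equiv_{\beta\delta} u$.

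The quantifier rules are the interesting ones and require a substitution lemma, which I would state and prove separately first:
\begin{lemma*}
For any well-typed term $t$ and value $v \in \llbracket A \rrbracket_{\rho}$ with $[u]_{\rho,\sigma}^{\circ} = v$, we have
$[t[x:=u]]_{\rho,\sigma}^{\circ} = [t]_{\rho,(\sigma, x \mapsto v)}^{\circ},$
and analogously for type substitution $[\tau[X:=A]]_{\rho,\sigma} = [\tau]_{(\rho, X \mapsto \llbracket A \rrbracket_{\rho}),\sigma}$.
\end{lemma*}
With this in hand, \textsc{S $\forall_E$} works as follows: the induction hypothesis gives $\widetilde{\forall}\{[t]_{(\sigma,x\mapsto v)}^{\circ} \mid v \in \llbracket A \rrbracket\}=1$, so every member of the set is $1$; instantiating with $v = [u]^{\circ}$ and applying the substitution lemma yields $[t[x:=u]]=1$. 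For \textsc{S $\forall_I$}, we must show the set above is everywhere $1$; for each $v \in \llbracket A \rrbracket_{\rho}$, the context $\mathcal{C},x:A$ extended by the valuation $x \mapsto v$ still satisfies the ambient hypothesis (since $v$ is not a propositional assumption), and the induction hypothesis on the subderivation gives the result. The two polymorphic rules \textsc{S $\forallp_E$} and \textsc{S $\forallp_I$} are treated symmetrically, quantifying over $M_X \in \mathcal{M}_f$ in place of $v \in \llbracket A \rrbracket$ and using the type-level part of the substitution lemma.

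The main obstacle, I expect, is the substitution lemma itself, not the induction over proof rules. It requires a careful simultaneous induction on terms handling both term and type variables, and one must verify that the interpretations of constants (which depend on the choice function) are insensitive to the ambient valuations since they are defined in the empty environment. A secondary subtlety is the side condition $x \notin \mathcal{C}$ in \textsc{S $\forall_I$}: this freshness is needed to guarantee that extending the valuation with $x \mapsto v$ does not clobber any propositional hypothesis already present in $\mathcal{C}$, so that the induction hypothesis remains applicable. Once these points are settled, assembling the eight cases is routine, and combining the resulting lemma with the observations that $[\forallT{x}{\Prop}{x}] = 0$ and that hypotheses in the empty context are vacuous yields the consistency stated in Theorem~\ref{thm:sttsound}.
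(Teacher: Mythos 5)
Your proposal is correct and follows exactly the route the paper takes: the paper's own proof is the one-line ``By induction on the derivation of \(t\),'' relying on the bulleted semantic properties stated just above the lemma, and your case analysis over the eight rules of Figure~\ref{fig:sttproof} is the intended expansion of that induction. The substitution lemma you isolate for the quantifier cases is left implicit in the paper but is indeed the technical content one would need to spell out.
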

\begin{proof}
  By induction on the derivation of \(t\).
\end{proof}

From the previous lemma, we can conclude that it is not possible to proof \(\bot\) in the empty context since its interpretation is \(0\). Hence, \sttabd{} is consistent.
\end{proof}
}

\ifundefinedcolor{longpaper}{}{
  \subsection*{Proof of theorem~\ref{thm:sttdecidable}}
  \begin{proof}
To prove that the type checking and the proof checking are decidable in \sttabd{}, we only have to prove that \(\equiv_{\beta\delta}\) is decidable. This property follows from the fact that \(\hookrightarrow_{\beta} \cup \hookrightarrow_{\delta}\) is a convergent term rewriting system (see Theorem.~\ref{thm:convergent}). It follows that testing that two terms are congruent is the same as computing their normal form.
  \end{proof}
\begin{theorem}
  \label{thm:convergent}
  \(\hookrightarrow_{\beta} \cup \hookrightarrow_{\delta}\) is a convergent term rewriting system.
\end{theorem}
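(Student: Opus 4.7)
The plan is to establish the two standard components of convergence separately: strong normalization and confluence of $\hookrightarrow_{\beta} \cup \hookrightarrow_{\delta}$.

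For strong normalization, I would first observe that the well-formedness rule \textsc{S WF cst defn} forces the constant signature $\Sigma$ to be built incrementally: each defined constant $cst = \tau : T$ is typed in the prefix of $\Sigma$ preceding it, so $\tau$ cannot mention $cst$ itself nor any later constant. This makes the $\delta$-dependency relation on constants a strict partial order. From this, I would define a translation $\lfloor \cdot \rfloor$ that fully unfolds every defined constant in a term (by induction on the order above; declared constants are left untouched). Then I would show that a single $\hookrightarrow_{\beta\delta}$ step in $t$ corresponds to a nonempty sequence of $\beta$-steps in $\lfloor t \rfloor$ (a $\beta$-step is mirrored by one $\beta$-step; a $\delta$-step on a constant already unfolded on both sides is mirrored by zero steps, but one can refine $\lfloor \cdot \rfloor$ to a measure — e.g., pair the number of $\delta$-redexes with $\lfloor t \rfloor$ — to avoid that subtlety). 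Since $\lfloor t \rfloor$ is a well-typed \sttabd{} term without defined constants, hence a well-typed term of predicative polymorphic simple type theory, it is strongly $\beta$-normalizing by the standard Tait--Girard reducibility argument. Termination of $\hookrightarrow_{\beta\delta}$ on the original term follows.

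For confluence, I would appeal to Newman's lemma: together with strong normalization, it suffices to prove local confluence. I would enumerate the critical pairs. A $\beta/\beta$ overlap is handled by the classical Church--Rosser analysis for the pure $\lambda$-calculus. A $\delta/\delta$ overlap reduces to the fact that each constant has a unique right-hand side in $\Sigma$, so the two rewrites yield identical terms. A $\beta/\delta$ overlap occurs only when a $\delta$-redex appears inside a $\beta$-redex (or vice versa) without true overlap on the redex pattern itself, since the $\delta$-rules rewrite a constant $cst$ while the $\beta$-rule rewrites $(\lambda x.\,t)\,u$; the two patterns are disjoint. Hence every such pair is joinable by simply performing the other rewrite afterwards. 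Local confluence follows, and by Newman's lemma the system is confluent.

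The main obstacle I expect is the interaction step in termination: showing that each $\hookrightarrow_{\beta\delta}$ step in $t$ maps to a strictly decreasing measure on $\lfloor t \rfloor$. A naïve unfolding collapses pure $\delta$-steps, so the measure must track both the $\beta$-length of $\lfloor t \rfloor$ (which decreases on $\beta$-steps) and an auxiliary count of unexpanded $\delta$-redexes weighted by the depth of the defining chain (which strictly decreases on $\delta$-steps without increasing on $\beta$-steps, because $\beta$-reduction duplicates or erases subterms but does not introduce new defined-constant occurrences). Once this lexicographic measure is in place, the reduction to $\beta$-strong-normalization of \sttabd{} is routine, and the remainder of the argument proceeds as sketched.
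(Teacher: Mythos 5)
Your argument is correct in outline but follows a genuinely different route from the paper's. The paper proves Theorem~\ref{thm:convergent} in one step: it observes that \sttabd{} is almost a subsystem of system \(F_{\omega}\) (reading \(\Prop\) as a type constructor and encoding a type operator \(p\) of arity \(n\) as \(A_1 \to \dots \to A_n \to p\)) and then imports convergence wholesale from ``\(F_{\omega}\) with constants''. You instead decompose convergence into strong normalization plus confluence: for termination you eliminate \(\delta\) by full unfolding, justified by the acyclicity of \(\Sigma\) (which \textsc{S WF cst defn} does guarantee, since \(\tau\) is typed in the prefix of \(\Sigma\) preceding \(cst\)), reduce to \(\beta\)-strong normalization of the predicative polymorphic core via reducibility, and repair the collapse of pure \(\delta\)-steps with a lexicographic measure; for confluence you use Newman's lemma with a critical-pair analysis (no \(\beta\)/\(\delta\) overlaps, unique right-hand sides for \(\delta\)). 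Your version makes explicit the one fact the paper's one-line proof leaves implicit --- \(\delta\) terminates only because the signature is acyclic --- and isolates the genuinely hard ingredient (\(\beta\)-SN of the polymorphic fragment) instead of delegating everything to an external result; the cost is length and one loose detail: a \(\delta\)-step replaces one occurrence of \(cst\) by possibly many occurrences of strictly earlier constants, so an additive ``depth of the defining chain'' weight need not decrease. Use the multiset extension of the well-founded order on constants (or an exponential weight) for the second lexicographic component; note also that duplication of constants by \(\beta\)-steps is harmless only because your first component already decreases strictly there. With that repair the proof goes through.
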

\begin{proof}
  One can see that the type system of \sttabd{} is almost a sub-system of system \(F_{\omega}\). Indeed, \(\Prop\) can be interpreted as a type constructor. Types operators can be encoded in the system \(F_{\omega}\) via the following encoding:
  \begin{align*}
    \llbracket p \rrbracket &= A_1 \to \dots \to A_n \to p
  \end{align*}
where \(p\) is a type constructor of arity \(n\). Hence, the convergence of \(\hookrightarrow_{\beta\delta}\) can be proven using the one from the calculus of system \(F_{\omega}\) with constants.
\end{proof}
}

\ifundefinedcolor{longpaper}{}{
  \subsection*{Translation functions from \sttabd{} to Dedukti[\sttabd{}]}
\begin{definition}
  We define the function \(|\cdot|_{mty}\) that takes a \sttabd{} monotype to a term of \lpcstt{} inductively as:
  \begin{align*}
    |X|_{mty} &=& X\\
    |\Prop|_{mty} &=& prop\\
    |A \to B|_{mty} &=& |A|_{mty}\dot{\to}|B|_{mty}\\
    |p(A_1,\dots,A_n)|_{mty} &=& p~|A_1|_{mty}~\dots~|A_n|_{mty}
  \end{align*}

  then we define the function \(|\cdot|_{pty}\) that takes a \sttabd{} type to a term of \lpcstt{} inductively as:
  \begin{align*}
    |A|_{pty} &=& p~|A|_{mty}\\
    |\forall X.~T|_{pty} &=& forallK~(\absT{X}{pty}{|T|_{pty})}
  \end{align*}
\end{definition}

\begin{definition}
  We define the function \(|\cdot|_{cctx}\) that takes a \sttabd{} constant term to a term of \lpcstt{} inductively as:
  \begin{align*}
    |cst|_{cst} &=& cst\\
    |c~A|_{cst} &=& |c|_{cst}~|A|_{pty}
  \end{align*}
\end{definition}
\begin{definition}
  We define the function \(|\cdot|_{mte}\) that takes a \sttabd{} monoterm to a term of \lpcstt{} inductively as:
  \begin{align*}
    |x|_{mte} &=& x\\
    |c|_{mte} &=& |c|_{cst}\\
    |\app{t_1}{t_2}|_{mte} &=& |t_1|_{mte}~|t_2|_{mte}\\
    |\absT{x}{A}{t}|_{mte} &=& \absT{x}{eta~|A|_{pty}}{|t|_{mte}}\\
    | t_1 \Rightarrow t_2|_{mte} &=& impl~|t_1|_{mte}~|t_2|_{mte}\\
    | \forall x^A, t|_{mte} &=& forall~|A|_{pty}~(\absT{x}{eta~|A|_{pty}}{|t|_{mte}})\\
    |\abs{A}{t}|_{mte} &=& \absT{A}{pty}{|t|_{mte}}
  \end{align*}
\end{definition}
\begin{definition}
  We define the function \(|\cdot|_{pte}\) that takes a \sttabd{} term to a term of \lpcstt{} inductively as:
  \begin{align*}
    |t|_{pte} &=& |t|_{mte}\\
    |\forall A,\tau|_{pte} &=& forallP~(\absT{A}{pty}{|\tau|_{pte})}
  \end{align*}
\end{definition}

\begin{definition}
    We define the function \(|\cdot|_{cctx}\) that takes a \sttabd{} constant context to a context of \lpcstt{} inductively as:
  \begin{align*}
    |\emptyset|_{cctx} &=& \Sigma_{\text{\sttabd}}\\
    |\Sigma, cst:T|_{cctx} &=& |\Sigma|_{cctx} \cup (cst :eta~|T|_{pty})\\
    |\Sigma, cst = \tau:T|_{cctx} &=& |\Sigma|_{cctx} \cup (cst :eta~|T|_{pty}) \cup (cst \hookrightarrow |\tau|_{pte})
  \end{align*}
\end{definition}

\begin{definition}
  We define the function \(|\cdot|_{tectx}\) that takes a \sttabd{} tern context to a context of \lpcstt{} inductively as:
  \begin{align*}
    |\emptyset|_{tectx} &=& \emptyset\\
    |\Gamma, x:A|_{tectx} &=& |\Gamma|_{tectx} \cup x:eta~|A|_{pty}\\
    |\Gamma, X|_{tectx} &=& |\Gamma|_{tectx} \cup X:type
  \end{align*}
\end{definition}
\begin{definition}
  We define the function \(\left[\phantom{P}\right]_{proof}^{ctx}\) that translates a derivation tree to a Dedukti[\sttabd{}] term. The function is indexed by a \(ctx\) to remember the hypothesis introduced during the proof for the \(assume\) case. In the following, the notation \(\mathcal{P}_t\) means that \(t\) is the proposition (term) proved by the derivation \(\mathcal{P}\).
  \begin{align*}
    \left[\Rule{ t \vdash t}{S assume}\right]_{proof}^{ctx} &=& find~t~ctx\\
    \left[\RulePP{ \mathcal{P}_{t} }{\mathcal{P}_{t\rightarrow u}}{\mathcal{C} \vdash u }{S \(\Rightarrow_E\)}\right]_{proof}^{ctx} &=& \left[\mathcal{P}_{t\rightarrow u}\right]_{proof}^{ctx}~\left[\mathcal{P}_{t}\right]_{proof}^{ctx}\\
    \left[\RuleP{ \mathcal{P}_{u}}{\mathcal{C} \vdash t \Rightarrow u }{S \(\Rightarrow_I\)}\right]_{proof}^{ctx} &=& \absT{x}{proof~|t|_{pte}}{\left[\mathcal{P}_{u}\right]_{proof}^{(t,x)::ctx}} \\
    \left[\RulePP{\mathcal{P}_{\forall t} }{\mathcal{J}_{u}}{\mathcal{C} \vdash t[u/x] }{S \(\forall_E\)}\right]_{proof}^{ctx} &=& \left[\mathcal{P}_{\forall t}\right]_{proof}^{ctx}~\left|u\right|_{pte}\\
    \left[\RuleP{ \mathcal{P}_t}{\mathcal{C} \vdash \forall x^A, t }{S \(\forall_I\)}\right]_{proof}^{ctx} &=& \absT{x}{eta~|t|_{pte}}{\left[\mathcal{P}_{t}\right]_{proof}^{ctx}} \\ \\
    \left[\RuleP{ \mathcal{P}_t }{\mathcal{C} \vdash u }{S conv}\right]_{proof}^{ctx} &=& \mathcal{P}_t\\
     \left[\RuleP{ \mathcal{P}_{\tau}}{\mathcal{C} \vdash \forall X,\tau}{S \(\forallp_E\)}\right]_{proof}^{ctx} &=& \absT{X}{pty}{\left[\mathcal{P}_{\tau}\right]_{proof}^{ctx}} \\
  \end{align*}

  For the translation of \(\Rightarrow_{I}\), \(x\) should be a fresh variable.
\end{definition}

}
\ifundefinedcolor{longpaper}
{
}
{
  \subsection*{Proof that the encoding is sound}
  In the following \(\vdash_D\) means that this is a \lpc{} judgment. We also omit the \(\Sigma_{\text{\sttabd{}}}\) context inside the judgment for clarity as well as the context for constants.

  \begin{lemma}{Free variables in monotypes}
    \label{lemma:fvmonotypes}

    \(\forall \Gamma\), \(A\), if \(\Gamma \vdash A~\mathbf{wf}\) is derivable then \(FV(A) \subseteq \Gamma\)
  \end{lemma}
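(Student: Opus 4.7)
The plan is to proceed by structural induction on the derivation of the judgment $\Gamma \vdash A~\mathbf{wf}$. Since monotype well-formedness is syntax-directed (each constructor of a monotype corresponds to exactly one rule), this is equivalent to induction on the structure of $A$, but phrasing it as induction on the derivation keeps the argument tightly aligned with the rules from Figure~\ref{fig:stttyping}.

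First I would unfold $FV$ on monotypes in the obvious way: $FV(X) = \{X\}$, $FV(\Prop) = \emptyset$, $FV(A \to B) = FV(A) \cup FV(B)$, and $FV(p~A_1~\dots~A_n) = \bigcup_i FV(A_i)$. Then I would walk through the four rules that can conclude $\Sigma;\Gamma \vdash A~\mathbf{wf}$. For \textsc{S WF var}, the premise $X \in \Gamma$ is exactly $FV(X) \subseteq \Gamma$. For \textsc{S WF prop}, $FV(\Prop) = \emptyset$ so the inclusion is trivial. For \textsc{S WF fun}, the inductive hypotheses on the two premises give $FV(A) \subseteq \Gamma$ and $FV(B) \subseteq \Gamma$, and the union is then contained in $\Gamma$. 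For \textsc{S WF tyop app}, the induction hypothesis applied to each premise $\Sigma;\Gamma \vdash A_i~\mathbf{wf}$ yields $FV(A_i) \subseteq \Gamma$, and taking the union over $i$ closes the case.

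There is no real obstacle here: the lemma is essentially a sanity check that the well-formedness rules only introduce type variables through the context. The only point that requires any care is fixing a clear convention for what $\Gamma$ denotes in the inclusion $FV(A) \subseteq \Gamma$, since $\Gamma$ contains both term bindings $x:A$ and type variables $X$; I would read the inclusion as $FV(A) \subseteq \{X \mid X \in \Gamma\}$, i.e.\ restricted to the type-variable component of $\Gamma$. With that convention in place, each case is a one-line verification from the corresponding rule and (where applicable) the induction hypothesis.
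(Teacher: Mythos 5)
Your proof is correct and matches the paper's, which simply states ``by induction on \(A\), all cases are trivial''; you have spelled out the same structural induction case by case. The clarifying convention you add about reading \(FV(A) \subseteq \Gamma\) as inclusion in the type-variable component of \(\Gamma\) is a reasonable and harmless precision the paper leaves implicit.
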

  \begin{proof}
    By induction on \(A\), all cases are trivial.
  \end{proof}

  \begin{lemma}{Translation of monotypes}
    \label{lemma:monotypes}

    Forall \(\Gamma\) and \(A\), if \(\Gamma \vdash A~\mathbf{wf}\) is derivable, then
    \[|\Gamma | \vdash |A|_{mty} : type\]
  \end{lemma}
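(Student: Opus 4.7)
The plan is to proceed by induction on the derivation of $\Gamma \vdash A~\mathbf{wf}$, following the four rules that conclude a monotype well-formedness judgment in Figure~\ref{fig:stttyping}. Before entering the induction I would first observe an auxiliary fact about context translation: whenever $\Sigma \vdash \Gamma~\mathbf{wf}$ in \sttabd{}, the Dedukti context $|\Gamma|_{tectx}$ is well-formed, with each type variable $X$ appearing as a declaration $X : type$ and each term variable $x:A$ appearing as $x : eta~|A|_{pty}$. This is what justifies the use of the dk var rule later and it follows by a straightforward induction on the derivation of $\Sigma \vdash \Gamma~\mathbf{wf}$, invoking this very lemma on the $A$ parts.

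The case analysis then proceeds as follows. For \textsc{S WF var}, $X \in \Gamma$, the auxiliary fact gives $X : type \in |\Gamma|_{tectx}$, and dk var concludes $|\Gamma| \vdash X : type$; Lemma~\ref{lemma:fvmonotypes} ensures this $X$ is genuinely in scope. For \textsc{S WF prop}, $|\Prop|_{mty} = prop$, and the signature $\signature{}$ declares $prop : type$, so the judgment is immediate. For \textsc{S WF fun}, the induction hypothesis on the two premises yields $|\Gamma| \vdash |A|_{mty} : type$ and $|\Gamma| \vdash |B|_{mty} : type$; combining these with the signature entry $arr : type \to type \to type$ via two applications of the dk app rule gives $|\Gamma| \vdash arr~|A|_{mty}~|B|_{mty} : type$, which is precisely $|A \to B|_{mty}$. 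For \textsc{S WF tyop app} with $(p:n) \in \Sigma$, the induction hypothesis on each subderivation provides $|\Gamma| \vdash |A_i|_{mty} : type$ for $1 \le i \le n$; then $n$ successive dk app steps applied to the signature declaration of $p$ at type $type \to \cdots \to type$ yield $|\Gamma| \vdash p~|A_1|_{mty}~\cdots~|A_n|_{mty} : type$.

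The main obstacle I expect is bureaucratic rather than conceptual: I must be careful that the encoded signature really contains, for every type operator $(p:n) \in \Sigma$, a declaration $p : type \to \cdots \to type$ with exactly $n$ occurrences of $type$, so that the chain of dk app steps in the last case is well-typed. A related subtlety is ensuring $|\Gamma|_{tectx}$ is well-formed whenever we invoke dk var, which is why the context-translation lemma above has to be proved hand-in-hand with the present lemma. Notably, no use of the conversion rule dk $\equiv$ is required, because each clause of $|\cdot|_{mty}$ produces a Dedukti term whose type is syntactically $type$, so the argument stays within pure applicative structure and the signature declarations of Figure~\ref{fig:sigtypes}.
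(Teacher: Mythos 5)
Your proposal is correct and matches the paper's proof, which is simply ``by induction on \(A\) using Lemma~\ref{lemma:fvmonotypes}'': since the well-formedness rules are syntax-directed, your induction on the derivation coincides with the paper's induction on the structure of \(A\), and your use of the free-variable lemma in the variable case is exactly the paper's intended use of it. The case analysis and the remarks about the signature declarations and the absence of any conversion step are a faithful (if more detailed) elaboration of what the paper leaves implicit.
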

  \begin{proof}
    By induction on \(A\) using lemma~\ref{lemma:fvmonotypes}.
  \end{proof}

  \begin{lemma}{Translation of polytypes}
    \label{lemma:types}

    \(\forall \Gamma\) and \(T\), if \(\Gamma \vdash T~\mathbf{wf}\) is derivable, then
    \[|\Gamma | \vdash |T|_{pty} : ptype\]
  \end{lemma}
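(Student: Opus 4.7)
\emph{Plan.} I will prove the statement by induction on the structure of $T$ (equivalently, on the derivation of $\Gamma \vdash T~\mathbf{wf}$), following the two clauses of the definition of $|\cdot|_{pty}$ and appealing to Lemma~\ref{lemma:monotypes} for the base case.

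In the base case $T$ is a monotype $A$, so $|T|_{pty} = p~|A|_{mty}$ and the goal becomes $|\Gamma| \vdash p~|A|_{mty} : ptype$. Lemma~\ref{lemma:monotypes} provides $|\Gamma| \vdash |A|_{mty} : type$, and since the signature of Figure~\ref{fig:sigtypes} declares $p : type \to ptype$, a single use of the \textit{dk app} rule of Figure~\ref{lambdapi} closes the case.

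In the inductive case $T = \forall X.~T'$, the only wellformedness rule producing the judgment is \textsc{S WF forall ty}, whose premise is $\Sigma;\Gamma,X \vdash T'~\mathbf{wf}$. The induction hypothesis yields $|\Gamma,X| \vdash |T'|_{pty} : ptype$, which by the definition of $|\cdot|_{tectx}$ unfolds to $|\Gamma|, X:type \vdash |T'|_{pty} : ptype$. Applying \textit{dk $\lambda$} then gives $|\Gamma| \vdash \lambda X:type.~|T'|_{pty} : type \to ptype$, and a final \textit{dk app} against $forallK_{type} : (type \to ptype) \to ptype$ produces $|\Gamma| \vdash forallK_{type}~(\lambda X:type.~|T'|_{pty}) : ptype$, which is exactly $|\Gamma| \vdash |\forall X.~T'|_{pty} : ptype$.

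I do not foresee a serious obstacle: the shallow encoding lines up step-for-step with the source wellformedness rules, and the induction is essentially mechanical once Lemma~\ref{lemma:monotypes} is in hand. The only delicate point worth double-checking is that the abstraction must be $\lambda X : type$ rather than $\lambda X : pty$ (the displayed clause for $|\forall X.~T|_{pty}$ annotates $X$ with $pty$, but the signature forces $type$, since $forallK_{type}$ expects an argument of type $type \to ptype$), and that $|\Gamma, X|_{tectx}$ indeed extends the context with $X : type$, matching the binder.
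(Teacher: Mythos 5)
Your proof is correct and follows exactly the route the paper takes: the paper's own proof of Lemma~\ref{lemma:types} is the one-liner ``by induction on \(T\) using Lemma~\ref{lemma:monotypes}'', and your two cases (the \(p\)-coercion for monotypes, \(\mathsf{forallK_{type}}\) applied to an abstraction for \(\forallT{X}{}{T'}\)) are the expected unfolding of that induction. Your side remark is also well taken --- the annex clause \(|\forallT{X}{}{T}|_{pty} = forallK~(\absT{X}{pty}{|T|_{pty}})\) annotates the binder with \(pty\) whereas the signature \(\mathsf{forallK_{type}} : \mathsf{(type \to ptype) \to ptype}\) and the clause \(|\Gamma, X|_{tectx} = |\Gamma|_{tectx} \cup X:type\) both force the annotation \(type\), so your correction is the reading under which the lemma actually type-checks.
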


  \begin{proof}
    By induction on \(T\) using lemma~\ref{lemma:monotypes}.
  \end{proof}

  \begin{lemma}{Constant contexts}
    \label{lemma:constants}

    For all constant context \(\Sigma\), \[\Sigma~\mathbf{wf} \Rightarrow |\Sigma|_{cctx}~\mathbf{wf} \]

    The context of the right-hand side being a \lpc{} context.
  \end{lemma}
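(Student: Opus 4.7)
The plan is to proceed by induction on the derivation of $\Sigma~\mathbf{wf}$, following the four rules that can conclude such a judgment (\textit{S WF empty}, \textit{S WF cst decl}, \textit{S WF cst defn}, and \textit{S WF tyop}). In each case the induction hypothesis gives us $|\Sigma'|_{cctx}~\mathbf{wf}$ for the predecessor, and we have to show that the extension prescribed by the translation function $|\cdot|_{cctx}$ is a legal extension of a \lp{}-modulo-theory context, i.e.~that we can close it either with \textit{dk ctx sort} (for a variable declaration) or with \textit{dk ctx $\hookrightarrow$} (for a rewrite rule).

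The base case is $\Sigma = \emptyset$, for which $|\emptyset|_{cctx} = \signature{}$; well-formedness of this fixed signature is a direct verification against Figure~\ref{fig:sigtypes}, using only the \lp{} rules \textit{dk ctx sort} and the sort axiom $\Type:\Kind$. The type-operator case $\Sigma, (p:n)$ is similarly easy: the symbol $p$ is translated as a Dedukti constant of type $type \to \cdots \to type \to type$, whose sort is $\Type$, and we extend with \textit{dk ctx sort}. For the declaration case $\Sigma, cst:T$, Lemma~\ref{lemma:types} gives $|\Sigma|_{cctx} \vdash |T|_{pty} : ptype$; applying the Dedukti constant $term : ptype \to \Type$ yields $|\Sigma|_{cctx} \vdash term~|T|_{pty} : \Type$, after which \textit{dk ctx sort} closes the case.

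The interesting case is \textit{S WF cst defn}, i.e.~$\Sigma, cst = \tau : T$. Here the translation extends the context by \emph{two} entries: first the typing declaration $cst : term~|T|_{pty}$, which is handled exactly as in the previous paragraph, and then the rewrite rule $cst \hookrightarrow |\tau|_{pte}$. For the latter, well-formedness of a rewrite rule in \lp{}-modulo-theory requires that both sides be typable at convertible types. Typability of the left-hand side is immediate from the just-added declaration. For the right-hand side, we need an auxiliary lemma on the translation of polyterms, stating that if $\Sigma;\emptyset;\emptyset \vdash \tau : T$ in \sttabd{} then $|\Sigma|_{cctx} \vdash |\tau|_{pte} : term~|T|_{pty}$ in \lp{}-modulo-theory. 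This lemma is proved by mutual induction with its monoterm analogue on the typing derivation of $\tau$, using Lemmas~\ref{lemma:monotypes} and~\ref{lemma:types} for the type-level obligations introduced by binders and by constant applications.

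The main obstacle will be the term-translation lemma used in the definition case: its statement is not yet on the page, and establishing it cleanly requires threading the translations of typing contexts $|\Gamma|_{tectx}$ and proof contexts through the induction, being careful that substitutions of monotypes commute with $|\cdot|_{mty}$ (needed in the \textit{S cst app} case) and that $\beta\delta$-reductions on the \sttabd{} side correspond to $\beta\Gamma$-conversions on the Dedukti side (needed in the \textit{S conv} case). Once that lemma is in place, the present proof is essentially bookkeeping: the only thing to check at each inductive step is that the new context entry satisfies the side conditions of \textit{dk ctx sort} or \textit{dk ctx $\hookrightarrow$}, and both reduce to lemmas already proved or immediately at hand.
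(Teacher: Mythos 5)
Your proposal is correct and follows essentially the same route as the paper: the paper proves this lemma by mutual recursion with the context lemma (Lemma~\ref{lemma:sdcontexts}) and the term-schema translation lemma (Lemma~\ref{lemma:termschema}), which is exactly the dependency you identify in the \textit{S WF cst defn} case, where the rewrite rule $cst \hookrightarrow |\tau|_{pte}$ forces you to invoke the translation lemma for polyterms. Your case analysis and the observation that everything else is bookkeeping via \textit{dk ctx sort} and \textit{dk ctx $\hookrightarrow$} is a faithful (and more detailed) elaboration of the paper's one-line argument.
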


  \begin{lemma}[Context]
    \label{lemma:sdcontexts}
    Forall \(\Gamma\), if \(\vdash \Gamma~\mathbf{wf}\) is derivable, then \(\vdash |\Gamma|~\mathbf{wf}\) is derivable.
  \end{lemma}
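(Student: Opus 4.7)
The plan is to proceed by induction on the derivation of \(\Sigma \vdash \Gamma~\mathbf{wf}\), which, given the typing rules, is essentially induction on the structure of \(\Gamma\). We also tacitly rely on Lemma~\ref{lemma:constants} to know that the ambient signature \(|\Sigma|_{cctx}\) (and in particular \(\signature{}\)) is a well-formed \lpc{} context, so that all the base declarations \(type\), \(ptype\), \(eta\), and so on, are available.

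First I would dispatch the base case \(\Gamma = \emptyset\): by definition \(|\emptyset|_{tectx} = \emptyset\), whose well-formedness is immediate by rule \textit{dk empty ctx} (combined with Lemma~\ref{lemma:constants} to pull in the constant part). Next, for the type-variable extension \(\Gamma, X\), the induction hypothesis gives \(\vdash_D |\Gamma|_{tectx}~\mathbf{wf}\), and since \(type\) is a declared constant of sort \(\Type\) in the signature, I can conclude with rule \textit{dk ctx sort} that \(|\Gamma|_{tectx}, X:type~\mathbf{wf}\), which is exactly \(|\Gamma, X|_{tectx}~\mathbf{wf}\).

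The remaining case is the term-variable extension \(\Gamma, x:A\), which comes from the \sttabd{} rule requiring both \(\Sigma \vdash \Gamma~\mathbf{wf}\) and \(\Sigma; \Gamma \vdash A~\mathbf{wf}\). The induction hypothesis gives \(\vdash_D |\Gamma|_{tectx}~\mathbf{wf}\), and Lemma~\ref{lemma:types} applied to \(\Sigma;\Gamma \vdash A~\mathbf{wf}\) yields \(|\Gamma|_{tectx} \vdash_D |A|_{pty} : ptype\). Applying the Dedukti symbol \(eta : ptype \to \Type\) then produces \(|\Gamma|_{tectx} \vdash_D eta~|A|_{pty} : \Type\), and one application of \textit{dk ctx sort} closes the case by giving \(|\Gamma|_{tectx}, x:eta~|A|_{pty}~\mathbf{wf}\), which matches the definition of \(|\Gamma, x:A|_{tectx}\).

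The only subtlety is making sure that Lemma~\ref{lemma:types} can indeed be invoked in the extended context: this requires that the side premises \(\Sigma;\Gamma \vdash A~\mathbf{wf}\) recorded by the well-formedness derivation are transported through the induction, which is straightforward because these premises are exactly the hypotheses of Lemma~\ref{lemma:types}. I expect this to be the main (though still routine) point: verifying that the proof simultaneously maintains well-formedness of the context and the availability of well-formed type translations for each declaration stored in it.
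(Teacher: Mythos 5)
Your proof is correct and follows essentially the same route as the paper, which disposes of this lemma (together with Lemmas~\ref{lemma:constants} and~\ref{lemma:termschema}) in one line as ``proved by mutual recursion on the derivation of each judgment''; your induction on the derivation of \(\Sigma \vdash \Gamma~\mathbf{wf}\), discharging the \(\Gamma, x:A\) case via Lemma~\ref{lemma:types} and \textit{dk ctx sort}, is exactly the intended argument spelled out. The only presentational difference is that the paper packages this lemma into a mutual recursion with the constant-context and term-schema lemmas, whereas you treat it as a standalone induction that only consumes the type-translation lemmas, which is fine since the context cases never need the term lemma.
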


  \begin{lemma}{Translation of terms schema}
    \label{lemma:termschema}

    For every term schema \(t\) and type schema \(A\),
    \[\mathcal{C} \vdash t : A \Rightarrow \Gamma_t \vdash |t|_{mte} : eta~|A|_{pty}\]
    where \(\Gamma_t\) contains declarations of free variables and free type variables that occur in \(t\).
  \end{lemma}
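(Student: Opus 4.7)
The plan is to proceed by structural induction on the derivation of the \sttabd{} typing judgment $\mathcal{C} \vdash t : A$, relying on Lemma~\ref{lemma:monotypes} and Lemma~\ref{lemma:types} to handle well-formedness of the translated types, and on Lemma~\ref{lemma:sdcontexts} and Lemma~\ref{lemma:constants} to know that $|\Gamma|$ and $|\Sigma|_{cctx}$ are well-formed \lpc{} contexts. The context $\Gamma_t$ is constructed from $\mathcal{C}$ by keeping only the declarations of the free term and type variables occurring in $t$; standard strengthening/weakening properties of \lpc{} let us restrict the inductive hypotheses to this subset.

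For each \sttabd{} typing rule I would exhibit the corresponding \lpc{} derivation. The rules \textsc{S var} and \textsc{S cst} follow directly from \textsc{dk var} using the translated contexts. For \textsc{S app}, \textsc{S abs}, \textsc{S imp}, and \textsc{S forall}, the inductive hypothesis yields a term of type $eta~|A\to B|_{pty}$ (resp.\ $eta~(p~prop)$), and the crucial step is to use the rewrite rule $term~(p~(arr~l~r)) \rw term~(p~l) \to term~(p~r)$ via \textsc{dk $\equiv$} to expose the arrow, at which point \textsc{dk app} or \textsc{dk $\lambda$} closes the goal. The rule \textsc{S poly intro} translates to a \lpc{} abstraction over a variable of type $type$, whose target type is obtained by unfolding $term~(forallK_{type}~f)$ via the corresponding rewrite rule, while \textsc{S Poly} is handled symmetrically using $forallK_{prop}$.

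The main obstacle is the \textsc{S cst app} case, where the conclusion is $\mathcal{C} \vdash c~B : T[X := B]$. Two points must be settled. First, I need a commutation lemma stating that translation commutes with monotype substitution, i.e.\ $|T[X := B]|_{pty} \equiv |T|_{pty}[X := |B|_{mty}]$; this is proved by an auxiliary induction on $T$, using $FV(B) \subseteq \mathcal{C}$ (hence no capture occurs in the translated term). Second, the inductive hypothesis gives $\Gamma_c \vdash |c|_{cst} : eta~(forallK_{type}~(\lambda X.~|T|_{pty}))$, which must be rewritten via $term~(forallK_{type}~f) \rw (x:type) \to term~(f~x)$ into a product so that \textsc{dk app} applies to $|B|_{mty}$ (well-typed as $type$ by Lemma~\ref{lemma:monotypes}); a single $\beta$-step on the resulting type then yields the desired $eta~|T[X := B]|_{pty}$ up to the commutation lemma.

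Elsewhere the argument is mechanical: each \sttabd{} binder is matched by a \lpc{} binder whose domain is obtained either from Lemma~\ref{lemma:monotypes} (for term-level binders, whose domain is $eta~|A|_{pty}$) or is literally $type$ or $ptype$ (for type-level binders), and conversions are always discharged by applications of the five rewrite rules of Figure~\ref{fig:sigtypes} together with $\beta$. The only subtlety beyond \textsc{S cst app} is bookkeeping for $\Gamma_t$, which I expect to settle once and for all with a standard free-variable inversion lemma on \sttabd{} derivations.
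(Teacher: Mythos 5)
Your overall strategy --- induction on the derivation of \(\mathcal{C} \vdash t : A\), with each \sttabd{} rule matched by a \lpc{} derivation in which the rewrite rules of Figure~\ref{fig:sigtypes} are discharged through \textsc{dk \(\equiv\)}, plus an auxiliary commutation lemma for substitution in the \textsc{S cst app} case --- is the same approach the paper takes, and your case analysis is considerably more explicit than the paper's proof, which consists of a single sentence.

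The one point where you diverge from the paper, and where your sketch is circular as stated, is the treatment of Lemma~\ref{lemma:constants} and Lemma~\ref{lemma:sdcontexts}: you invoke them as prerequisites that are already available, but the paper proves them \emph{by mutual recursion with} Lemma~\ref{lemma:termschema}, and this is not a stylistic choice. The rule \textsc{S WF cst defn} makes well-formedness of a constant context depend on a typing judgment \(\Sigma;\emptyset;\emptyset \vdash \tau : T\) for the body of every defined constant, so establishing that \(|\Sigma|_{cctx}\) is a well-formed \lpc{} context already requires the very statement you are proving; conversely, your \textsc{S var} and \textsc{S cst} cases need the translated contexts to be well formed. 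The repair is exactly what the paper records: prove the three lemmas simultaneously, by induction on the size of the derivation of whichever judgment is at hand, observing that the typing derivations embedded inside \(\Sigma~\mathbf{wf}\) are strict sub-derivations. With that restructuring, the rest of your sketch --- in particular the commutation property \(|T[X:=B]|_{pty} \equiv |T|_{pty}[X:=|B|_{mty}]\) and the use of the \(forallK_{type}\) rewrite rule to expose a product before applying \textsc{dk app} --- is sound and supplies details the paper leaves entirely implicit.
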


  \begin{proof}
    Lemma~\ref{lemma:constants}, lemma~\ref{lemma:sdcontexts} and lemma~\ref{lemma:termschema} are proved by mutual recursion on the derivation of each of judgment.
  \end{proof}

  \begin{theorem}[Soundness]
    If \(\mathcal{C} \vdash t\) then \(|\mathcal{C}| \vdash |\mathcal{C} \vdash t| : proof~|t|\)
  \end{theorem}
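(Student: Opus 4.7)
The plan is to proceed by induction on the derivation of $\mathcal{C} \vdash t$, checking that for each proof rule the Dedukti term produced by $[\cdot]_{proof}^{ctx}$ is well-typed with the expected type. The previous lemmas on translation of monotypes, polytypes, constant contexts, typing contexts, and term schemata immediately discharge all the type-level obligations: every type or term component that appears inside the translated proof term is already known to be well-typed in Dedukti with the right translated type. What remains is to check that each proof constructor assembles these pieces into a Dedukti term of type $proof~|t|_{pte}$.

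First I would handle the structural cases. For S assume, the translation picks out the variable associated with the hypothesis $t$ in $ctx$; by construction of $ctx$, which is extended only by S $\Rightarrow_I$ with a binding $x : proof~|t|_{pte}$, such a variable exists with the expected type. For S $\Rightarrow_I$, the term $\lambda x^{proof~|t|_{pte}}. [\mathcal{P}_u]$ has type $proof~|t|_{pte} \to proof~|u|_{pte}$ by the IH, which is convertible to $proof~(impl~|t|_{pte}~|u|_{pte}) = proof~|t \Rightarrow u|_{pte}$ via the rewrite rule on $proof~(impl~\cdot~\cdot)$. The S $\Rightarrow_E$ case is dual, relying on the same rule to view the functional proof as an application. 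The two rules for $\forall$ on terms and for $\forallp$ on types follow the same pattern, using the rewrite rules $proof~(forall~t~f) \hookrightarrow (x:term~(p~t)) \to proof~(f~x)$ and $proof~(forallK_{prop}~f) \hookrightarrow (x:type) \to proof~(f~x)$ together with the term schema lemma to type the witness in the elimination cases.

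The conversion case S conv is the main obstacle. There the translated proof term is unchanged, so we must retype $[\mathcal{P}_t]$ from $proof~|t|_{pte}$ to $proof~|u|_{pte}$ using the Dedukti rule dk $\equiv$. This requires showing that the translation preserves convertibility: if $t \equiv_{\beta\delta} u$ in $\sttabd$, then $|t|_{pte} \equiv_{\beta\Gamma} |u|_{pte}$ in Dedukti. I would prove this by a routine induction on reduction, showing that a single $\beta$-step in $\sttabd$ maps to a $\beta$-step in Dedukti, and that a $\delta$-step maps to a rewrite step via the rule $cst \hookrightarrow |\tau|_{pte}$ added to $|\Sigma|_{cctx}$ for each defined constant $cst = \tau : T$, then closing under Dedukti's congruence. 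Care is needed for the $\delta$-case with polymorphic constants, since an unfolding $c~A \hookrightarrow_\delta \tau[X:=A]$ in $\sttabd$ translates into an unfolding followed by a $\beta$-step in Dedukti, because $|c~A|_{cst}$ is $|c|_{cst}~|A|_{pty}$ and $|c|_{cst}$ rewrites to an abstraction over $pty$.

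Combining the IH on subderivations, the typing of ambient terms and contexts from the earlier lemmas, and the preservation of convertibility for the conversion case, we conclude that $[\mathcal{C} \vdash t]_{proof}^{\emptyset}$ is a Dedukti term of type $proof~|t|_{pte}$ in context $|\mathcal{C}|$, which is the desired soundness statement.
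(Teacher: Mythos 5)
Your proposal takes the same route as the paper, whose entire proof is the single line ``by induction on the derivation''; you simply supply the case analysis that the paper leaves implicit. Your treatment is correct, and in particular you rightly isolate the one genuinely non-trivial obligation that the paper never states --- that \(t \equiv_{\beta\delta} u\) in \sttabd{} must be mirrored by \(|t|_{pte} \equiv_{\beta\Gamma} |u|_{pte}\) in Dedukti so that the \rulenamestyle{S conv} case can be discharged by \rulenamestyle{dk \(\equiv\)}, including the subtlety that a \(\delta\)-step on an applied polymorphic constant becomes an unfolding followed by \(\beta\)-steps on the Dedukti side.
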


  \begin{proof}
    By induction on the derivation of \(|\mathcal{C} \vdash t|\).
  \end{proof}
}

\ifundefinedcolor{longpaper}{}
{
  \subsection*{Encoding \sttabd{} connectives from equality}
To show the admissibility of the other rules, we need to prove first the admissibility of the three rules below:
\begin{mathpar}
  \addrule{\RuleP{\Gamma \vdash \app{(\abs{x}{t_1})}{u_1} =  \app{(\abs{x}{t_2})}{u_2}}{\Gamma \vdash \subst{t_1}{x}{u_1} = \subst{t_2}{x}{u_2}}{\(\beta_{=}\)}}
  \addrule{\RuleP{\Gamma \vdash x \land y}{\Gamma \vdash x}{\(\land_1\)}}
  \addrule{\RuleP{\Gamma \vdash x \land y}{\Gamma \vdash y}{\(\land_2\)}}
\end{mathpar}

\paragraph{\(\beta_{=}\)}

\begin{mathpar}
  \scriptsize
  \inferrule* [right=trans,sep=8pt]
  {
    \inferrule*
    { }
    { \vdash (\lambda y.~t_2)~u_2 = \subst{t_2}{x}{u_2}}
    \\
    \inferrule*[right=trans]
    {
      \inferrule*
      {}
      {\Gamma \vdash (\lambda x.~t_1)~u_1 = (\lambda x.~t_2)~u_2}
      \\
      \inferrule*
      {
        \inferrule*
        { }
        {\vdash \app{(\abs{x}{t_1})}{u_1} = \subst{t_1}{x}{u_1}}
      }
      {\vdash \subst{t_1}{x}{u_1} = \app{(\abs{x}{t_1})}{u_1}}
    }
    {\Gamma \subst{t_1}{x}{u_1} = (\lambda x.~t_2)~u_2}
  }
  {\Gamma \vdash \subst{t_1}{x}{u_1} = \subst{t_2}{x}{u_2}}
\end{mathpar}

\paragraph{\(\land_1\)}
From the definition of \(\land\), one can use the projection \(\lambda x.~\lambda y.~x\) to get the first component.

\paragraph{\(\land_2\)}
Same proof except we use the function \(\lambda x.~\lambda y.~y\)

\paragraph{\(\Rightarrow_E\)}

\begin{mathpar}
  \scriptsize
  \inferrule*
  {
    \inferrule*
    {
      \inferrule* [Right=\(\land_2\)]
      {
        \inferrule* [Right=eqMp]
        {
          \inferrule*
          { }
          {p \vdash p}
          \\
          \inferrule* [Right=sym]
          {
            \inferrule* [Right=eqMp]
            {
              \inferrule*
              { }
              {p \Rightarrow q \vdash p \Rightarrow q}
              \\
              \inferrule* [Right=instantation]
              {
                \inferrule* [Right=Axiom]
                { }
                {\vdash \lambda x. \lambda y.~x \Rightarrow y = x \land y = x}
              }
              {\vdash p \Rightarrow q = p \land q = p}
            }
            {p \Rightarrow q \vdash p \land q = p}
          }
          {p = p \land q}
        }
        {p, p\Rightarrow q \vdash p \land q}
      }
      {p, p\Rightarrow q \vdash q}
      \\
      \inferrule*
      { \Pi}
      {\Gamma \vdash p}
    }
    { \Gamma, p \Rightarrow q \vdash q}
    \\
    \inferrule*
    { \Pi}
    {\Gamma \vdash p \Rightarrow q}
  }
  { \Gamma \vdash q}

\end{mathpar}

where \textit{instantiation} is the following proof

\begin{mathpar}
  \inferrule* [Right=\(\beta_{=}\)]
  {
    \inferrule* [Right=App]
    {
      \inferrule*  [Right=\(\beta_{=}\)]
      {
        \inferrule* [Right=App]
        {
          \inferrule* [right=Axiom]
          { }
          {\vdash \lambda x. \lambda y.~x \Rightarrow y = \lambda x. \lambda y.~x \land y = x}
          \\
          \inferrule* [Right=Refl]
          { }
          {\vdash p = p}
        }
        {\vdash (\lambda x.\lambda y.~x \Rightarrow y)~p = (\lambda x.\lambda y.~x \land y = x)~p}
      }
      {\vdash \lambda y.~p \Rightarrow y = \lambda y.~p \land y = p}
      \\
      \inferrule* [Right=Refl]
      { }
      {\vdash q = q}
    }
    {\vdash (\lambda y.~p \Rightarrow y)~q = (\lambda y.~p \land y = p)~q}
  }
  {\vdash p \Rightarrow q = p \land q = p}
\end{mathpar}

\paragraph{\(\forall_I\)}

\begin{mathpar}
  \inferrule*[right=EqMp]
  {
    \inferrule*
    {
      \inferrule*[right=DeductAntiSym]
      {
        \inferrule*
        {
          \Pi
        }
        { \Gamma \vdash t}
        \\
        \inferrule* { }
        { \Gamma \vdash \top }
      }
      { \Gamma \vdash t = \top }
    }
    {\Gamma \vdash \lambda x.~t = \lambda x.~\top }
    \\
    \inferrule*
    {
      \inferrule*[Right=instant+axiom]
      { }
      {\Gamma \vdash \forall x^A.~t = (\lambda x.~t = \lambda x.~\top)}
    }
    {\Gamma \vdash (\lambda x.~t = \lambda x.~\top) = \forall x^A.~t}
  }
  { \Gamma \vdash \forall x^A.~t}
\end{mathpar}

\paragraph{\(\forall_E\)}

\begin{mathpar}
  \scriptsize
  \inferrule*
  {
    \inferrule*
    {}
    { \Gamma \vdash \top }
    \\
    \inferrule*
    {
      \inferrule*
      {
        \inferrule*
        {
          \inferrule*
          {
            \inferrule*
            {\Pi }
            { \Gamma \vdash \forall x.~t}
            \\
            \inferrule*
            { }
            {\Gamma \vdash \forall x.~t = \left(\lambda x.~t = \lambda x.~\top\right)}
          }
          { \Gamma \vdash \lambda x.~t = \lambda x.~\top }
          \\
          \inferrule*
          { }
          { \vdash u = u }
        }
        { \Gamma \vdash (\lambda x.~t)~u = (\lambda x.~\top)~u}
        \\
        \inferrule*
        {}
        { \Gamma \vdash (\lambda x.~\top)~u = \top}
      }
      {\Gamma \vdash (\lambda x.~t)~u = \top }
    }
    { \Gamma \vdash \top = (\lambda x.~t)~u }
  }
  { \Gamma \vdash (\lambda x.~t)~u }
\end{mathpar}
\begin{mathpar}
  \inferrule*
  {
    \inferrule*
    { \vdots }
    {\Gamma \vdash (\lambda x.~t)~u }
    \\
    \inferrule*[Right=\(\beta\)]
    { }
    { \Gamma \vdash (\lambda x^A.~t)~u = t[x := u]}
  }
  { \Gamma \vdash t[x := u] }
\end{mathpar}

}

  \ifundefinedcolor{longpaper}{}
  {
    \subsection*{Contextual rules in OpenTheory}
      \begin{mathpar}
        \inferrule*
        {
          \inferrule*
          {}
          {\Gamma \vdash t = t'}
        }
        {\Gamma \vdash \forallP{X}{}{t} = \forallP{X}{}{t'}}
      \end{mathpar}

      \begin{mathpar}
        \inferrule*[Left=OT DeductAntiSym]
        {
          \inferrule*[Left=S \(\forallp\) intro]
          {
            \inferrule*[Left=OT EqMp]
            {
              \inferrule*[Left=S \(\forallp\) elim]
              {
                \inferrule*[Left=OT Assume]
                { }
                {X; \Gamma, \forallP{X}{}{t} \vdash \forallP{X}{}{t} }
              }
              {X; \Gamma, \forallP{X}{}{t} \vdash t}
              \\
              \inferrule*
              {}
              {\Gamma \vdash t = t'}
            }
            {X; \Gamma, \forallP{X}{}{t} \vdash t'}
          }
          {\Gamma, \forallP{X}{}{t} \vdash \forallP{X}{}{t'}}
          \\
          \inferrule*
          {}
          {\vdots}
        }
        {\Gamma \vdash \forallP{X}{}{t} = \forallP{X}{}{t'}}
      \end{mathpar}

    The rules \textsc{S \(\forallp_E\)} and \textsc{S \(\forallp_I\)} will be eliminated at section~\ref{sec:quantifiers}.

    \begin{mathpar}
        \inferrule*
        {
          \inferrule*
          {}
          {\Gamma \vdash t = t'}
        }
        {\Gamma \vdash \forallT{x}{A}{t} = \forallT{x}{A}{t'}}
      \end{mathpar}

            \begin{mathpar}
        \inferrule*[Left=OT DeductAntiSym]
        {
          \inferrule*[Left=S \(\forall\) intro]
          {
            \inferrule*[Left=OT EqMp]
            {
              \inferrule*[Left=S \(\forall\) elim]
              {
                \inferrule*[Left=OT Assume]
                { }
                {x:A; \Gamma, \forallT{x}{A}{t} \vdash \forallT{x}{A}{t} }
              }
              {x:A; \Gamma, \forallT{x}{A}{t} \vdash t}
              \\
              \inferrule*
              {}
              {\Gamma \vdash t = t'}
            }
            {x:A; \Gamma, \forallT{x}{A}{t} \vdash t'}
          }
          {\Gamma, \forallT{x}{A}{t} \vdash \forallT{x}{A}{t'}}
          \\
          \inferrule*
          {}
          {\vdots}
        }
        {\Gamma \vdash \forall{x}{A}{t} = \forall{x}{A}{t'}}
      \end{mathpar}

      The rules \textsc{S \(\forall_E\)} and \textsc{S \(\forall_I\)} are encoded as shown in section~\ref{sec:forallimpl}.
    }
\end{document}